\newtheorem{thm}{Theorem}[section]\crefname{thm}{Theorem}{Theorems}
\newtheorem{lem}[thm]{Lemma}\crefname{lem}{Lemma}{Lemmas}
\crefname{prop}{Proposition}{Propositions}
\numberwithin{equation}{section}
\DeclareMathOperator{\tr}{tr}
\DeclareMathOperator{\GHZ}{GHZ}
\DeclareMathOperator{\bulk}{bulk}
\DeclareMathOperator{\area}{Area}
\DeclarePairedDelimiter\abs{\lvert}{\rvert}
\DeclarePairedDelimiter\norm{\lVert}{\rVert}
\DeclarePairedDelimiter\parens{\lparen}{\rparen}
\DeclarePairedDelimiter\braces{\lbrace}{\rbrace}
\newcommand{\ot}{\otimes}
\newcommand{\id}{\mathrm{id}}
\newcommand{\CC}{\mathbbm{C}}
\newcommand{\ZZ}{\mathbbm{Z}}
\newcommand{\EE}{\mathbbm{E}}
\newcommand{\HH}{\mathcal{H}}
\newcommand{\eps}{\varepsilon}
\newcommand{\bigO}{\mathcal O}
\newcommand{\proj}[1]{\mathinner{\lvert#1\rangle\!\langle#1\rvert}}
\newcommand{\ketbra}[2]{\mathinner{\lvert#1\rangle\!\langle#2\rvert}}
\begin{document}

%=============================================================================
\title{Fun with replicas: \texorpdfstring{\\}{} tripartitions in tensor networks and gravity}
\hypersetup{pdftitle={Fun with replicas: tripartitions in tensor networks and gravity}, pdfauthor={Geoff Penington, Michael Walter, Freek Witteveen}}
\date{}
\author[1,2]{Geoff Penington}
\emailAdd{geoffp@berkeley.edu}
\author[3]{Michael Walter}
\emailAdd{michael.walter@rub.de}
\author[4]{Freek Witteveen}
\emailAdd{fw@math.ku.dk}
\affiliation[1]{Center for Theoretical Physics and Department of Physics, University of California,~Berkeley,~USA}
\affiliation[2]{Institute for Advanced Study, Princeton, USA}
\affiliation[3]{Faculty of Computer Science, Ruhr University Bochum, Germany}
\affiliation[4]{Department of Mathematical Sciences and QMATH, University of Copenhagen, Denmark}
%-----------------------------------------------------------------------------
\abstract{We analyse a simple correlation measure for tripartite pure states that we call~$G(A:B:C)$.
The quantity is symmetric with respect to the subsystems~$A$,~$B$,~$C$, invariant under local unitaries, and is bounded from above by $\log d_A d_B$.
For random tensor network states, we prove that $G(A:B:C)$ is equal to the size of the \emph{minimal tripartition} of the tensor network, i.e., the logarithmic bond dimension of the smallest cut that partitions the network into three components with $A$, $B$, and~$C$.
We argue that for holographic states with a fixed spatial geometry, $G(A:B:C)$ is similarly computed by the minimal area tripartition.
For general holographic states, $G(A:B:C)$ is determined by the minimal area tripartition in a backreacted geometry, but a \emph{smoothed} version is equal to the minimal tripartition in an unbackreacted geometry at leading order.
We briefly discuss a natural family of quantities~$G_n(A:B:C)$ for integer~$n \geq 2$ that generalize $G=G_2$.
In holography, the computation of $G_n(A:B:C)$ for $n>2$ spontaneously breaks part of a $\ZZ_n \times \ZZ_n$ replica symmetry.
This prevents any naive application of the Lewkowycz-Maldacena trick in a hypothetical analytic continuation to~$n=1$.}
%-----------------------------------------------------------------------------
\maketitle
% \flushbottom
%=============================================================================

%=============================================================================
\section{Introduction}
%=============================================================================
The Ryu-Takayanagi formula~\cite{ryu2006aspects} states that the entanglement entropies in holographic CFTs are related via the AdS/CFT dictionary to the area of minimal surfaces in the dual bulk geometry.
Specifically, for static states, at the leading classical order in the Newton constant~$G_N$, we have%
\footnote{This formula has since been generalized to time-dependent bulk geometries and to include quantum corrections~\cite{hubeny2007covariant,faulkner2013quantum,engelhardt2015quantum}.}
\begin{align} \label{eq:RT}
    S(A) = \min_{\gamma_A \cup A = \partial a} \frac{\area(\gamma_A)}{4G_N},
\end{align}
where $A$ is any boundary subregion and $\area(\gamma_A)$ is the area of the surface~$\gamma_A$ found by minimizing over all surfaces homologous to $A$ within a static slice of the bulk spacetime.
A closely analogous formula exists for random tensor networks~\cite{hayden2016holographic} with maximally entangled links, where the entanglement entropy of a set of boundary legs $A$ is at leading order in the bond dimension $D$ given by
\begin{align} \label{eq:TNRT}
    S(A) = \min_{\gamma_A} \, \abs{\gamma_A} \log D = \min_{\gamma_A} \log d_{\gamma_A},
\end{align}
where the minimization is now over all cuts~$\gamma_A$ through the network homologous to~$A$,
% (equivalently, over all cuts separating~$A$ from the complementary set of boundary legs~$\bar A$)
and~$d_{\gamma_A}$ is the product of the dimensions of all legs in the cut $\gamma_A$.
This analogy has led to considerable interest in tensor networks as a toy model of quantum gravity~\cite{hayden2016holographic,yang2016bidirectional,qi2017holographic,qi2018spacetime,penington2022replica,nezami2020multipartite,dong2021holographic,dong2022replica,akers2022reflected,dutta2021canonical,kudler2022negativity,qi2022holevo,apel2022holographic,cheng2022random,akers2022reflected2}.

The standard derivation of both \cref{eq:RT} and \cref{eq:TNRT} uses the \emph{replica trick}.
One first computes integer $n$ R\'{e}nyi entropies for the density matrix $\rho_A = \tr_{\bar A}[\ketbra{\Psi}{\Psi}]$
\begin{align} \label{eq:Renyireplica}
    S_n(A) = \frac{1}{1-n} \log\tr[\rho_A^n] = \frac{1}{1-n} \log \bra{\Psi}^{\otimes n} \tau_A \ket{\Psi}^{\otimes n}
\end{align}
by writing them in terms of the expectation value of an operator $\tau_A$ acting on $n$ copies of the state $\ket{\Psi}$.
Specifically, $\tau_A$ acts by permuting the $n$ copies of the subsystem $A$ in a cyclic permutation $\tau$.
(In general, given a subsystem $A$ and a permutation $\pi \in S_n$ we will denote by $\pi_A$ the operator which applies the permutation $\pi$ to the $n$ copies of the system $A$.) %, while acting as the identity on the $n$ copies of $\bar A$.)
The entanglement entropy can be computed from \cref{eq:Renyireplica} by analytically continuing to $n=1$.

For random tensor networks, there is in fact no need to do an analytic continuation in order to obtain an entropy equal to $\log d_{\gamma_A}$, with $\gamma_A$ the minimal cut.
In the limit of large bond dimensions, all the R\'{e}nyi entropies $S_n$ are the same at leading order, and so $S(A)$ can be replaced by $S_n(A)$ without altering \cref{eq:TNRT}.
In AdS/CFT, this is not the case; for typical semiclassical states, such as the CFT ground state, it is just the entanglement entropy $S(A)$ which is holographically dual to the minimal surface area $\area(\gamma_A)/4G_N$, while the R\'{e}nyi entropies $S_n(A)$ are related to areas in backreacted geometries~\cite{lewkowycz2013generalized,dong2016gravity}.
However, the \emph{smoothed R\'{e}nyi entropies}
\begin{align*}
    S_n^\eps(\rho) = \max_{\tilde\rho \, \overset{\eps}{\approx} \, \rho} S_n(\tilde\rho),
\end{align*}
where the maximization is over states $\tilde\rho$ that are $\eps$-close to the state~$\rho$ (say in trace distance) are equal to $\area(\gamma_A)/4G_N$ up to subleading $\bigO(G_N^{-1/2})$ corrections~\cite{bao2019beyond}.

In this paper we discuss generalizations of the replica trick that involve different permutations being applied to multiple bulk subregions.
Such generalizations have been used to compute multipartite correlation measures such as entanglement negativities, reflected entropies and the realignment criteria~\cite{aubrun2012partial,aubrun2012realigning,nezami2020multipartite,dong2021holographic,dong2022replica,dutta2021canonical,akers2022reflected,kudler2022negativity}.%
\footnote{In fact, the replica trick used to compute $G(A:B:C)$ has previously appeared as part of analytic continuations used to compute the realignment criterion and the reflected entropy.}
Here, we focus on a particularly symmetric measure $G(A:B:C)$ that is defined for any  pure state $\ket{\Psi}_{ABC}$ in a Hilbert space made up of three subsystems $A$, $B$, and $C$, and that is computed using four replicas of~$\ket{\Psi}$.
Specifically, we define
\begin{align}\label{eq:G def}
    G(A : B : C) = -\frac{1}{2} \log \bra{\Psi}^{\otimes 4} \pi^{(1 2)(3 4)}_A \pi^{(1 3)(2 4)}_B \pi^{(1 4)(2 3)}_C \ket{\Psi}^{\otimes 4}.
\end{align}
Here, e.g., $\pi^{(1 2) (3 4)}_A$ acts on subsystem $A$ and swaps replica 1 with replica 2 and replica 3 with replica 4.
It is clear from this definition that $G(A:B:C)$ is invariant under permuting or relabeling the subsystems $A$, $B$, and $C$.
Since $G(A:B:C)$ is also invariant under local unitaries acting on any of the three subsystems, it is in fact uniquely determined by any of three reduced density matrices $\rho_{AB}$, $\rho_{BC}$, or $\rho_{AC}$ on a pair of subsystems.
However the full $S_3$~symmetry under permutations of $A$, $B$, and $C$ is only manifest when described using the pure tripartite state $\ket{\Psi}$.

In \cref{sec:multipartite entanglement replica trick}, we review the replica trick and its generalization to multiple subsystems.
We then prove a number of important properties for $G(A:B:C)$, including that
\begin{align}\label{eq:bounds on G}
    0 \leq G(A : B : C) \leq \min\,\braces[\big]{ \log d_A + \log d_B, \log d_A + \log d_C, \log d_B + \log d_C },
\end{align}
where $d_A, d_B$, and $d_C$ are the dimensions of the Hilbert spaces of systems $A$, $B$, and~$C$.
We also show that in the special case of purely bipartite entangled states $\ket{\Psi}_{ABC} = \ket{\psi_1}_{A_1 B_1} \ot \ket{\psi_2}_{A_2 C_1} \ot \ket{\psi_3}_{B_2 C_2}$ (with $\mathcal{H}_A \cong \mathcal{H}_{A_1} \otimes \mathcal{H}_{A_2}$ and so on) we have
\begin{align*}
    G(A : B : C) = \frac12 \parens[\big]{ S_2(A) + S_2(B) + S_2(C) }.
\end{align*}

\begin{figure}
\begin{center}
\begin{overpic}[width=.8\linewidth,grid=false]{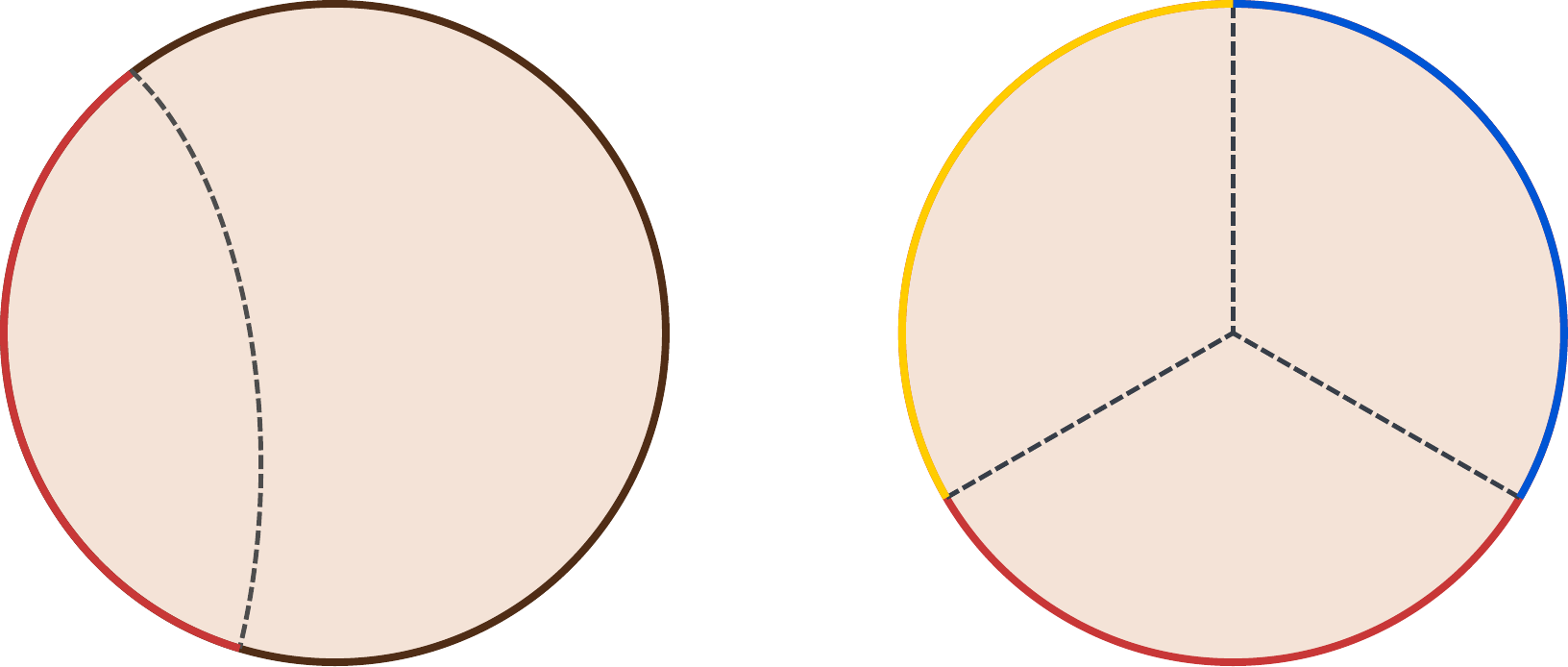}
    \put(4,15){\color{Bittersweet}{\Large{$A$}}}
    \put(36,23){\Large{$\bar{A}$}}
    \put(18,20){$\gamma_A$}
    \put(77,2){\color{Bittersweet}{\Large{$A$}}}
    \put(93,28){\color{Blue}{\Large{$B$}}}
    \put(62,28){\color{BurntOrange}{\Large{$C$}}}
    \put(76,16){$\gamma_{ABC}$}
    \put(-5,1){(a)} \put(52,1){(b)}
\end{overpic}
\end{center}
\caption{(a)~The von Neumann entropy of a boundary region $A$ for either a random tensor network or a holographic CFT state is computed by the area of the minimal surface $\gamma_A$ separating $A$ from its complement $\bar{A}$.
(b)~In this work we argue that in random tensor networks and fixed area states the quantity $G(A : B : C)$ defined in \cref{eq:G def} is computed by a minimal tripartition~$\gamma_{ABC}$, and that upon smoothing this is also approximately valid for general holographic states.\label{fig:mercedes star}}
\end{figure}

In \cref{sec:rtn}, we compute $G(A:B:C)$ for states created using random tensor networks.
In the limit of large bond dimension, we prove that $G(A:B:C)$ is given by the size of the \emph{minimal tripartition} of the tensor network separating $A$, $B$, and $C$.
Let us explain what we mean by a minimal tripartition (also know as a 3-terminal cut or 3-way cut \cite{dahlhaus1994complexity,vazirani2001approximation}).
The tensor network graph consists of bulk vertices and boundary vertices, where the boundary vertices are partitioned into subsets~$A$, $B$, and $C$.
A tripartition can be described as a set~$\gamma_{ABC}$ of edges such that upon removing the edges in $\gamma_{ABC}$ from the original graph, $A$, $B$, and $C$ are no longer pairwise connected.%
\footnote{Formally, we define a tripartition to be a partitioning $V = \Gamma_A \cup \Gamma_B \cup \Gamma_C$ of the vertices in three disjoint sets such that $A \subseteq \Gamma_A$, $B \subseteq \Gamma_B$ and $C \subseteq \Gamma_C$.
The set $\gamma_{ABC}$ is then the set of edges between different sets.}
A minimal tripartition is a tripartition of minimal size.
We prove that if the minimal tripartition $\gamma_{ABC}$ is unique, for large bond dimension $D$,
\begin{align*}
    G(A : B : C) \approx \min_{\gamma_{ABC}} \, \abs{\gamma_{ABC}} \log D = \min_{\gamma_{ABC}} \log(d_{\gamma_{ABC}})
\end{align*}
where $\abs{\gamma_{ABC}}$ is the size of the tripartition and $d_{\gamma_{ABC}}$ is the total bond dimension along it.
In contrast to the usual minimal cuts in graphs which can be found in polynomial time, interestingly, minimal 3-terminal cuts are NP-hard to compute~\cite{dahlhaus1994complexity}.
Note that if $\gamma_{A}$ and $\gamma_B$ are minimal surfaces for $A$ and $B$ respectively, then $\gamma_A \cup \gamma_B$ is a tripartition for $A$, $B$, and~$C$, but not necessarily a minimal one.
This is consistent with the upper bound in \cref{eq:bounds on G}.

In \cref{sec:gravity}, we turn out attention to holographic boundary states in AdS/CFT.
For a certain class of states, where the geometry on the static spatial slice is approximately fixed, we show that at leading order in the Newton constant~$G_N$,
\begin{align*}
    G(A:B:C) = \min_{\gamma_{ABC}} \frac{\mathrm{Area}(\gamma_{ABC})}{4G_N}
\end{align*}
where the minimization is similarly over (piecewise smooth) surfaces that tripartition the bulk geometry.
For example, if $A$, $B$, and $C$ are contiguous intervals at the boundary of vacuum AdS$_3$, then the surface~$\gamma_{ABC}$ is the ``Mercedes star'' shown in \cref{fig:mercedes star}.

For general holographic states, $G(A:B:C)$ is determined by the minimal area tripartition in a \emph{backreacted} geometry, analogous to the backreacted geometries that appear in R\'{e}nyi entropy computations.
However, again in close analogy to R\'{e}nyi entropies, one can define a smoothed quantity
\begin{align*}
    G^{\eps}(A : B : C)_{\ket{\Psi}} = \max_{\ket{\widetilde\Psi}\approx_\eps \ket{\Psi}} G^{\eps}(A : B : C)_{\ket{\widetilde\Psi}},
\end{align*}
which satisfies
\begin{align*}
     G^{\eps}(A : B : C)_{\ket{\Psi}} = \min_{\gamma_{ABC}} \frac{\area(\gamma_{ABC})}{4G_N} + \bigO\mleft(G_N^{-1/2}\mright).
\end{align*}
Here the minimization is over tripartitions $\gamma_{ABC}$ in the original, \emph{unbackreacted} geometry, as for fixed geometry states.

Finally in \cref{sec:generalize} we discuss possible generalizations of $G(A:B:C)$.
We define a family of quantities $G_n(A:B:C)$ for $n \geq 2$ that involve $n^2$ replicas, with $G_2 = G$.
We expect that, at leading order, $G_n \equiv G$ in random tensor networks and in fixed geometry states.
However, for $n>2$ we find that other semiclassical states, including the AdS vacuum, exhibit spontaneous symmetry breaking of a $\ZZ_n$ replica symmetry, such that $G_n$ \emph{cannot} be interpreted in terms of a tripartition area in a backreacted geometry.
We consider the possibility that the family $G_n(A : B : C)$ has a natural analytic continuation (analogous to the R\'{e}nyi entropies~$S_n$ for noninteger $n$) which would potentially include a quantity~$G_1$ (analogous to the entanglement entropy~$S$) determined by the limit in which the number of replicas goes to one.
However we find no obvious reason to expect that this is true.

%-----------------------------------------------------------------------------
\subsection*{Related work}
%-----------------------------------------------------------------------------
During the preparation of this manuscript, the related work \cite{gadde2022multi} appeared.
In this work, the authors assume the family $G_n$ is indeed analytic, and proceed to study the hypothetical quantity $G_1$, which they call the multi-entropy.%
\footnote{They also consider generalizations to four or more regions.}
As discussed above, we are unclear as to whether the assumption of analyticity is true.
Moreover, the authors of \cite{gadde2022multi} assume (as in the seminal derivation of the Ryu-Takayanagi formula in~\cite{lewkowycz2013generalized}) that the dominant bulk saddle preserves the boundary replica symmetry.
We show explicitly in \cref{sec:generalize} that this assumption is false in the present context:
the dominant saddles for $G_3$ for three contiguous intervals in vacuum AdS$_3$ spontaneously break the boundary replica symmetry.
The area of a minimal tripartition was also discussed as an interesting bulk quantity in~\cite{Bao:2018gck, Harper:2019lff, harper2021hyperthreads}, although no explicit boundary dual was suggested.

We also note that our setup is closely related to~\cite{akers2022reflected,akers2022reflected2}, which study the reflected entropy for random tensor networks.
The replica trick considered in our work is a special case of the one used to compute the reflected entropy in these works.
Our results in \cref{sec:rtn} are new since we analyze an arbitrary graph (with a unique minimal tripartition), while~\cite{akers2022reflected,akers2022reflected2} only prove results for tensor networks with at most two bulk vertices (although \cite{akers2022reflected} includes conjectures about more general tensor networks). Results for arbitrary graphs have been independently obtained in forthcoming work by the same authors \cite{akersreflected3}.

%=============================================================================
\section{A replica trick for multipartite entanglement}\label{sec:multipartite entanglement replica trick}
%=============================================================================
The replica trick is a method for analyzing the entanglement structure of a quantum state~$\ket{\Psi}$ by computing the expectation values of permutation operators acting on subsystems of the state.
So for example, given a bipartite state $\ket{\Psi}_{AB}$ with reduced density matrix $\rho_A$, the R\'{e}nyi entropy is
\begin{align*}
    S_n(A)_{\ket{\Psi}}
= \frac{1}{1-n} \log \tr\mleft[\rho_A^n\mright] = \frac{1}{1-n} \log \bra{\Psi}^{\otimes n} \tau_A \ket{\Psi}^{\otimes n},
\end{align*}
where $\tau_A$ cyclically permutes the $n$ copies of subsystem $A$.
Similar replica tricks can be used to compute properties of a tripartite state $\ket{\Psi}_{ABC}$.
Two well-known examples are the entanglement negativity and the reflected entropy, which have been studied using the replica trick in holographic theories and for random tensor network states~\cite{aubrun2012partial,nezami2020multipartite,dong2021holographic,dong2022replica,dutta2021canonical,akers2022reflected,kudler2022negativity}.
The realignment criterion is another example of an entanglement criterion that can be studied using a replica trick, as was done in \cite{aubrun2012realigning} for random states.

%-----------------------------------------------------------------------------
\subsection{Definition of \texorpdfstring{$G(A:B:C)$}{G(A:B:C)} and basic properties}\label{subsec:basic G}
%-----------------------------------------------------------------------------
In this work we focus on a quantity defined using particular simple replica trick.
Consider the nontrivial elements of the Klein four-group:
\begin{align}\label{eq:permutations for G}
\begin{split}
    \pi^{(1)} &= (1 2)(3 4), \\
    \pi^{(2)} &= (1 3)(2 4), \\
    \pi^{(3)} &= (1 4)(2 3).
\end{split}
\end{align}
Then, for a tripartite pure quantum state $\ket{\Psi}_{ABC}$, we define
\begin{align}\label{eq:def g}
    G(A:B:C)_{\ket{\Psi}} &= -\frac{1}{2} \log Z(A:B:C)_{\ket{\Psi}},
\end{align}
where
\begin{align*}
    Z(A:B:C)_{\ket{\Psi}} &= \bra{\Psi}^{\otimes 4} \parens*{ \pi^{(1)}_A \ot \pi^{(2)}_B \ot \pi^{(3)}_C } \ket{\Psi}^{\otimes 4}.
\end{align*}
As mentioned earlier, in general, given a subsystem $A$ and a permutation $\pi \in S_n$ we denote by $\pi_A$ the operator which applies the permutation $\pi$ to the $n$ copies of the system $A$, while acting as the identity otherwise.
We write $G(A:B:C)$, leaving out the subscript~$\ket\Psi$, whenever the state is clear or unimportant.

We have the following basic properties, which follow immediately from the definition.

\begin{lem}\label{lem:properties G}
The quantity $G(A : B : C)$ defined in \cref{eq:def g} satisfies the following properties:
\begin{enumerate}
\item\label{it:permutation invariance}
It is invariant under permuting (or relabeling) the subsystems $A$, $B$, and $C$.
E.g., $G(A:B:C) = G(B:A:C)$ etc.

\item It is additive under tensor products:
if $\ket{\Psi}_{A_1 B_1 C_1}$ and $\ket{\Phi}_{A_2 B_2 C_2} $ are pure quantum states, then
\begin{align}\label{eq:tensor product}
  G(A_1 A_2:B_1 B_2:C_1 C_2)_{\ket{\Psi} \ot \ket{\Phi}} = G(A_1:B_1:C_1)_{\ket{\Psi}} + G(A_2:B_2:C_2)_{\ket{\Phi}}.
\end{align}

\item It is invariant under local isometries.
That is, if $U_A \colon \HH_A \to \HH_{A'}$, $U_B \colon \HH_B \to \HH_{B'}$, and $U_C \colon \HH_C \to \HH_{C'}$ are isometries, then
\begin{align*}
    G(A' : B' : C')_{(U_A \ot U_B \ot U_C) \ket{\Psi}} = G(A : B : C)_{\ket{\Psi}}.
\end{align*}
\end{enumerate}
\end{lem}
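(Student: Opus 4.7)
I would derive all three parts directly from the definition
\begin{align*}
    Z(A:B:C)_{\ket\Psi} = \bra\Psi^{\otimes 4} \parens*{\pi^{(1)}_A \ot \pi^{(2)}_B \ot \pi^{(3)}_C} \ket\Psi^{\otimes 4}
\end{align*}
using three elementary building blocks: (a)~$\ket\Psi^{\otimes 4}$ is invariant under any diagonal replica permutation $\sigma_{ABC} := \sigma_A \ot \sigma_B \ot \sigma_C$ with $\sigma \in S_4$, since this merely relabels the four copies of $\ket\Psi$; (b)~replica-permutation operators factorize across tensor-product subsystems, $\pi_{X_1 X_2} = \pi_{X_1} \ot \pi_{X_2}$; and (c)~any isometry $U \colon \HH \to \HH'$ intertwines the permutation actions, $\pi_{\HH'} U^{\otimes 4} = U^{\otimes 4} \pi_\HH$, because both sides act on $\ket{a_1 a_2 a_3 a_4}$ by permuting the factors and embedding via $U$.

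For \cref{it:permutation invariance}, I would combine (a) with the key observation that the three nontrivial elements $\pi^{(1)}, \pi^{(2)}, \pi^{(3)}$ of the Klein four-group all lie in the same conjugacy class of $S_4$---they are exactly the products of two disjoint transpositions. For instance, $\sigma = (2\,3) \in S_4$ satisfies $\sigma \pi^{(1)} \sigma^{-1} = \pi^{(2)}$, $\sigma \pi^{(2)} \sigma^{-1} = \pi^{(1)}$, and $\sigma \pi^{(3)} \sigma^{-1} = \pi^{(3)}$. Inserting $\sigma_{ABC}^{\pm 1}$ on either side of the operator in $Z(A:B:C)$---which by (a) leaves both $\bra\Psi^{\otimes 4}$ and $\ket\Psi^{\otimes 4}$ unchanged---transforms it into $\pi^{(2)}_A \ot \pi^{(1)}_B \ot \pi^{(3)}_C$, hence $Z(A:B:C) = Z(B:A:C)$. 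Analogous choices $\sigma = (3\,4)$ and $\sigma = (2\,4)$ implement the remaining two transpositions of $\{A,B,C\}$, and since transpositions generate $S_3$ the full symmetry follows.

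Additivity under tensor products is then immediate from (b): applying the factorization separately to each of $A_1 A_2$, $B_1 B_2$, $C_1 C_2$, and reordering tensor factors so that the four copies of $\ket\Psi$ group together and the four copies of $\ket\Phi$ group together, gives $Z(A_1 A_2:B_1 B_2:C_1 C_2)_{\ket\Psi \ot \ket\Phi} = Z(A_1:B_1:C_1)_{\ket\Psi} \cdot Z(A_2:B_2:C_2)_{\ket\Phi}$, and $-\tfrac12\log$ converts this into \cref{eq:tensor product}. Local isometric invariance is analogously immediate from (c): combining $\pi_{\HH'} U^{\otimes 4} = U^{\otimes 4} \pi_\HH$ with $U^\dagger U = \id$ yields $(U^\dagger)^{\otimes 4}\,\pi_{\HH'}\,U^{\otimes 4} = \pi_\HH$, and applying this inside each of the three subsystems reduces $Z(A':B':C')_{(U_A \ot U_B \ot U_C)\ket\Psi}$ to $Z(A:B:C)_{\ket\Psi}$. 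None of these steps is really an obstacle; the only place requiring any care is the conjugation argument for \cref{it:permutation invariance}, where one must keep in mind that $\sigma \in S_4$ has to be applied simultaneously on all three subsystems in order for $\ket\Psi^{\otimes 4}$ to be stabilised.
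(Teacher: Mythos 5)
Your proposal is correct, and it fills in exactly the verification that the paper leaves implicit when it says these properties ``follow immediately from the definition'': the diagonal replica symmetry of $\ket\Psi^{\otimes 4}$ (the same symmetry the paper invokes just after the lemma to rewrite $Z$), the conjugation action of $S_4$ on the nontrivial Klein four-group elements realizing the full $S_3$, factorization of permutation operators across tensor factors, and the intertwining relation $\pi_{\HH'}U^{\otimes 4}=U^{\otimes 4}\pi_{\HH}$ together with $U^\dagger U=\id$. There is nothing to flag; the conjugating transpositions $(2\,3)$, $(3\,4)$, $(2\,4)$ do act on $\{\pi^{(1)},\pi^{(2)},\pi^{(3)}\}$ as you claim.
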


It can be convenient to write out $G(A : B : C)$ in a less symmetric way.
Since $\ket\Psi^{\ot n}$ is symmetric, for any permutations $\pi, \pi^{(1)}, \pi^{(2)}, \pi^{(3)} \in S_n$, we have
\begin{align*}
  \bra\Psi^{\ot n} \parens*{ \pi^{(1)}_A \ot \pi^{(2)}_B \ot \pi^{(3)}_C } \ket\Psi^{\ot n}
% = \bra\Psi^{\ot n} \pi_{ABC} (\pi^{(1)}_A \ot \pi^{(2)}_B \ot \pi^{(3)}_C) \ket\Psi^{\ot n}
= \bra\Psi^{\ot n} \parens*{ (\pi \pi^{(1)})_A \ot (\pi \pi^{(2)})_B \ot (\pi \pi^{(3)})_C } \ket\Psi^{\ot n}.
\end{align*}
For $n=4$, and choosing the permutations $\pi^{(i)}$ to be those in~\cref{eq:permutations for G} and $\pi = \pi^{(3)}$, we obtain
\begin{align*}
   Z(A:B:C) &= \bra\Psi^{\ot 4} \parens*{ \pi^{(1)}_A \ot \pi^{(2)}_B \ot \pi^{(3)}_C } \ket\Psi^{\ot 4} \\
&= \bra\Psi^{\ot 4} \parens*{ (\pi^{(3)} \pi^{(1)})_A \ot (\pi^{(3)} \pi^{(2)})_B \ot \id_C } \ket\Psi^{\ot 4} \\
&= \bra\Psi^{\ot 4} \parens*{ \pi^{(2)}_A \ot \pi^{(1)}_B \ot \id_C } \ket\Psi^{\ot 4} \\
&= \tr \rho_{AB}^{\ot 4} \parens*{ \pi^{(2)}_A \ot \pi^{(1)}_B },
\end{align*}
where we used that $\pi^{(3)} \pi^{(1)} = \pi^{(2)}$ etc.
One can similarly write this quantity as the expectation value of $\pi^{(i)}_A \ot \pi^{(j)}_B$ for any $i\neq j$, and also for any other pair of subsystems, e.g.,
\begin{equation}\label{eq:other permutations}
% \begin{aligned}
  Z(A:B:C) = \bra\Psi^{\ot 4} \parens*{ \pi^{(1)}_A \ot \pi^{(2)}_B \ot \id_C } \ket\Psi^{\ot 4}
    = \tr \rho_{AB}^{\ot 4} \parens*{ \pi^{(1)}_A \ot \pi^{(2)}_B }.
% \end{aligned}
\end{equation}

The above can be interpreted as follows:
think of $\rho_{AB}$ as a \emph{projected entangled pair state (PEPS)} tensor, with horizontal bond dimension~$d_A$, vertical bond dimension~$d_B$, and no physical degrees of freedom.
Then $Z(A:B:C)$ is obtained by placing four copies of $\rho_{AB}$ on a $2 \times 2$ lattice and contracting with periodic boundary conditions, as in \cref{fig:gpeps}~(a).
This in turn gives rise to another useful formula.
Denote by $X_2$ the matrix product operator on~$\HH_A^{\ot 2}$ defined by contracting two copies of $\rho_{AB}$ along the $B$-direction, i.e., $X_2 := \tr_{B^{\ot 2}}[\rho_{AB}^{\ot 2} (1\,2)_B]$.
Then, $X_2$ is Hermitian and we have
\begin{align}\label{eq:Z via MPO}
  Z(A:B:C)
= \tr \rho_{AB}^{\ot 4} \parens*{ \pi^{(1)}_A \ot \pi^{(2)}_B }
= \tr\mleft[ X_2^2 \mright]
= \tr\mleft[ X_2^\dagger X_2 \mright]
= \norm{X_2}_2^2,
\end{align}
which shows that $Z(A:B:C) > 0$.
This immediately implies that $G(A : B : C)$ is real.

\begin{figure}
\centering
\begin{subfigure}{.4\textwidth}
\begin{overpic}[width=.8\linewidth,grid=false]{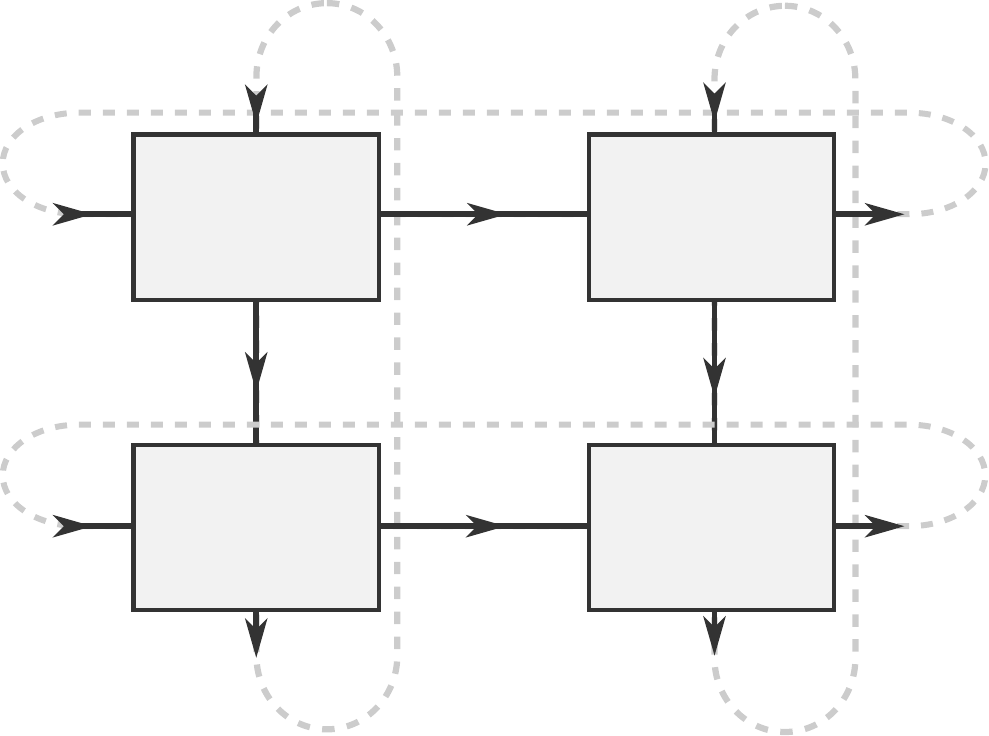}
  \put(20,50){$\rho_{AB}$} \put(65,50){$\rho_{AB}$}
  \put(20,19){$\rho_{AB}$} \put(65,19){$\rho_{AB}$}
  \put(46,55){\footnotesize{$B$}} \put(46,24){\footnotesize{$B$}}
  \put(29,35){\footnotesize{$A$}} \put(75,35){\footnotesize{$A$}}
  \put(-10,-10){(a)}
\end{overpic}
\vspace{0.55cm}
\end{subfigure}%
\hspace*{0.5cm}
\begin{subfigure}{.45\textwidth}
\begin{overpic}[width=.8\linewidth,grid=false]{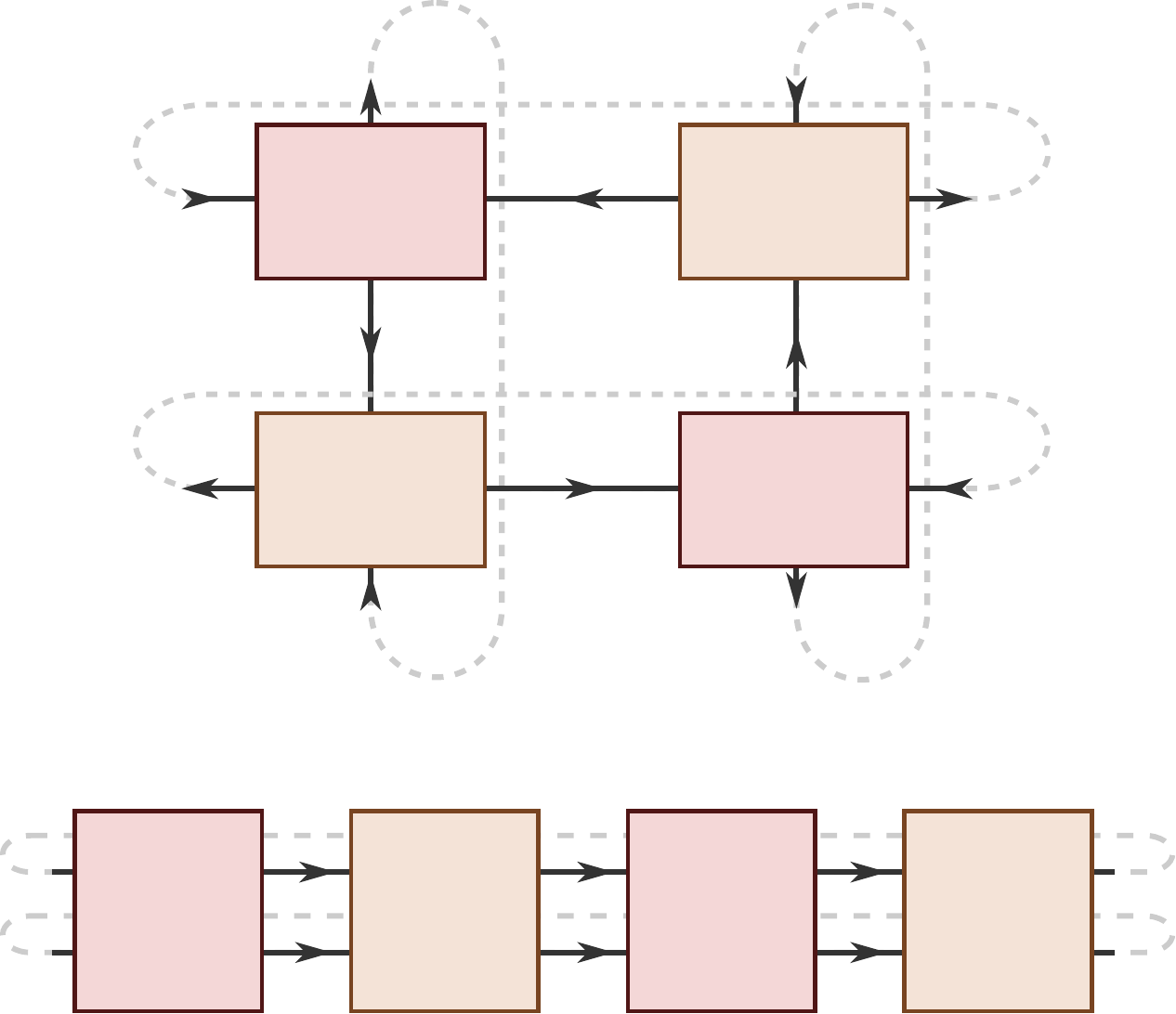}
  \put(25,67){$\rho_{AB}^R$} \put(62,67){$\rho_{AB}^{R,\dagger}$}
  \put(25,42){$\rho_{AB}^{R,\dagger}$} \put(62,42){$\rho_{AB}^R$}
  \put(48,72){\footnotesize{$B$}} \put(48,47){\footnotesize{$B$}}
  \put(34,55){\footnotesize{$A$}} \put(70,55){\footnotesize{$A$}}
  \put(50,27){\rotatebox{90}{$=$}}
  \put(-15,2){(b)}
  \put(9,6){$\rho_{AB}^R$} \put(56,6){$\rho_{AB}^R$}
  \put(33,6){$\rho_{AB}^{R,\dagger}$} \put(80,6){$\rho_{AB}^{R,\dagger}$}
  \put(47,-1){\footnotesize{$B$}} \put(94,-1){\footnotesize{$B$}}
  \put(24,-1){\footnotesize{$A$}} \put(71,-1){\footnotesize{$A$}}
\end{overpic}
\end{subfigure}
\caption{(a)~The quantity~$Z(A:B:C)$, defined in \cref{eq:other permutations} can be computed as a tensor network.
(b)~Alternatively, $Z(A:B:C)$ can be computed using the realignment~$\rho_{AB}^R$ and its adjoint~$\rho_{AB}^{R,\dagger}$, as in \cref{eq:Z via realignment}.}
\label{fig:gpeps}
\end{figure}

We can also express $G(A:B:C)$ in terms of the so-called \emph{realignment}~$\rho_{AB}^R$ of the reduced density matrix $\rho_{AB} = \tr_C\mleft[\proj{\Psi}\mright]$.
This is defined as follows in a basis independent way for an arbitrary operator $Y_{AB}$ on $\HH_A \ot \HH_B$:
we naturally identify $Y_{AB}$ with a vector in $\HH_A \ot \HH_B \ot \HH_A^* \ot \HH_B^*$.
By swapping the second and third tensor factor, we obtain a vector in $\HH_A \ot \HH_A^* \ot \HH_B \ot \HH_B^*$, which, interpreted as an operator $\HH_B^* \ot \HH_B \to \HH_A \ot \HH_A^*$, defines the realignment $Y_{AB}^R$.%
\footnote{If $Y_{AB}$ is expanded in the computational basis as $Y_{AB} = \sum_{i,k = 1}^{d_A} \sum_{j,l=1}^{d_B} \rho_{ij,kl} \ketbra{ij}{kl}$ and we identify $\HH_A^* \cong \HH_A$ and $\HH_B^* \cong \HH_B$ by the local computational bases, then $Y_{AB}^R = \sum_{i,k = 1}^{d_A} \sum_{j,l=1}^{d_B} \rho_{ij,kl} \ketbra{ik}{jl}$.}
Note that in general~$\rho_{AB}^R$ is not square, and even when it is, it need not be positive semidefinite (or even Hermitian).
The realignment can be used to detect entanglement in mixed states; if $\norm{\rho_{AB}^R}_1 > 1$ then~$\rho_{AB}$ must be entangled~\cite{chen2002matrix,rudolph2005further}.
It is easy to verify using \cref{eq:other permutations} and the fact that $\rho_{AB}$ is Hermitian, that, as shown in \cref{fig:gpeps}~(b),
\begin{align}\label{eq:Z via realignment}
  Z(A:B:C)
= \tr \rho_{AB}^{\ot 4} \parens*{ \pi^{(1)}_A \ot \pi^{(2)}_B }
= \tr\mleft[ \parens*{ ((\rho_{AB}^\dagger)^R)^\dagger \rho_{AB}^R }^2 \mright]
= \tr\mleft[ \parens*{ (\rho_{AB}^R)^\dagger \rho_{AB}^R }^2 \mright],
% = R_2(A:B)_\rho,
\end{align}
which can also be used to deduce that $Z(A:B:C)$ is positive and $G(A:B:C)$ is real.
% This implies that
% \begin{align}\label{eq:G equals R}
%     G(A : B : C) = -\frac12 \log R_2(A:B).
% \end{align}
% Since we manifestly have $R_2(A:B) > 0$, it immediately follows that~$G(A:B:C)$ is real.

We now proceed to prove basic bounds on $G(A:B:C)$.
We denote by $d_A$, $d_B$, and $d_C$ the dimensions of the Hilbert spaces of systems $A$, $B$, and $C$.

\begin{lem}\label{lem:bounds G}
For any tripartite pure state~$\ket{\Psi}_{ABC}$, we have
\begin{align*}
    0 \leq G(A : B : C)_{\ket{\Psi}} \leq \min\,\braces[\big]{ \log d_A + \log d_B, \log d_A + \log d_C, \log d_B + \log d_C }.
\end{align*}
\end{lem}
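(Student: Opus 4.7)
The plan is to exploit the equivalent expressions for $Z(A:B:C)$ derived just above the lemma, together with the $S_3$-permutation symmetry from \cref{lem:properties G}. Since $G = -\frac{1}{2}\log Z$, the lower bound is equivalent to $Z \leq 1$, and the upper bound is equivalent to $Z \geq (d_A d_B)^{-2}$ (together with its two symmetric variants).

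For the lower bound $G \geq 0$, I would use the realignment formula \cref{eq:Z via realignment}, which gives $Z(A:B:C) = \norm{\rho_{AB}^R}_4^4$. Monotonicity of Schatten norms yields $\norm{\rho_{AB}^R}_4 \leq \norm{\rho_{AB}^R}_2$, and a direct calculation shows $\norm{\rho_{AB}^R}_2^2 = \tr[\rho_{AB}^2] \leq 1$, since $\rho_{AB}$ is a density matrix. Chaining these bounds yields $Z \leq (\tr \rho_{AB}^2)^2 \leq 1$.

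For the upper bound I would use the MPO formulation \cref{eq:Z via MPO}: $Z(A:B:C) = \norm{X_2}_2^2$, with $X_2 = \tr_{B^{\ot 2}}[\rho_{AB}^{\ot 2} (1\,2)_B]$ viewed as an operator on $\HH_A^{\ot 2}$. Cauchy--Schwarz in the form $\abs{\tr M}^2 \leq (\dim \HH)\, \tr[M^\dagger M]$ applied to $X_2$ gives $\abs{\tr X_2}^2 \leq d_A^2\, \norm{X_2}_2^2 = d_A^2\, Z$. Tracing out the $A$ factors first identifies $\tr X_2 = \tr_{B^{\ot 2}}[\rho_B^{\ot 2}(1\,2)_B] = \tr[\rho_B^2]$, and since $\rho_B$ is a density matrix on a space of dimension at most $d_B$ one has $\tr[\rho_B^2] \geq 1/d_B$ (Cauchy--Schwarz on its eigenvalues). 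Combining these gives $Z \geq (d_A d_B)^{-2}$, hence $G(A:B:C) \leq \log d_A + \log d_B$. The remaining two bounds follow immediately by applying the permutation invariance (item 1 of \cref{lem:properties G}) to swap the roles of the three subsystems.

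I do not anticipate a real obstacle; the only slightly nontrivial steps are choosing the right Cauchy--Schwarz estimate for $X_2$ and verifying $\tr X_2 = \tr[\rho_B^2]$ by a short partial-tracing computation. The asymmetric appearance of $A$ and $B$ in the MPO bound is then symmetrized by the $S_3$-invariance from \cref{lem:properties G}, delivering all three faces of the minimum in one stroke.
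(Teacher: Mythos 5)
Your proof is correct. The upper bound is argued exactly as in the paper: both use \cref{eq:Z via MPO}, the identity $\tr X_2 = \tr[\rho_B^2] \geq 1/d_B$, and Cauchy--Schwarz in the form $\abs{\tr X_2} \leq d_A \norm{X_2}_2 = d_A\sqrt{Z}$, then symmetrize via \cref{lem:properties G}. For the lower bound you take a genuinely different route: the paper simply observes that $Z = \bra\Psi^{\ot4}\parens{\pi^{(1)}_A \ot \pi^{(2)}_B \ot \pi^{(3)}_C}\ket\Psi^{\ot4} \leq \norm{\pi^{(1)}_A \ot \pi^{(2)}_B \ot \pi^{(3)}_C}_\infty \leq 1$, since the permutation operator is unitary and $\ket\Psi^{\ot 4}$ is normalized, whereas you pass through \cref{eq:Z via realignment}, monotonicity of Schatten norms, and the fact that realignment preserves the Hilbert--Schmidt norm. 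Your chain is valid ($Z = \norm{\rho_{AB}^R}_4^4 \leq \norm{\rho_{AB}^R}_2^4 = (\tr\rho_{AB}^2)^2 \leq 1$) and in fact buys something the paper's one-liner does not: it shows $G(A:B:C) \geq -\log\tr\rho_{AB}^2 = S_2(C)$, hence by permutation invariance $G(A:B:C) \geq \max\braces{S_2(A), S_2(B), S_2(C)}$, a strictly stronger lower bound than $0$ (complementing the paper's remark after the proof that $G \leq S_0(A) + S_2(B)$). The paper's argument is shorter and needs only that the inserted operator is unitary; yours extracts more information at the cost of invoking the realignment machinery.
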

\begin{proof}
We already know that $Z(A:B:C)$ is real.
Since $\ket{\Psi}$ is a unit vector,
\begin{align*}
  Z(A:B:C)
= \bra\Psi^{\ot4} \parens*{ \pi^{(1)}_A \ot \pi^{(2)}_B \ot \pi^{(3)}_C } \ket\Psi^{\ot4}
\leq \norm*{\pi^{(1)}_A \ot \pi^{(2)}_B \ot \pi^{(3)}_C}_\infty
\leq 1,
\end{align*}
where $\norm{\cdot}_\infty$ denotes the operator norm.
Hence $G(A:B:C)_{\ket{\Psi}} \geq 0$.

To prove the upper bounds, we use \cref{eq:Z via MPO}, which states that $Z(A:B:C) = \norm{X_2}_2^2$ for $X_2 = \tr_{B^{\ot 2}}[\rho_{AB}^{\ot 2} (1\,2)_B]$, which is an operator on $\HH_A \ot \HH_A$.
By the Cauchy-Schwarz inequality,
\begin{align*}
  \frac1{d_B} \leq \tr \rho_B^2 = \tr X_2 \leq \norm{X_2} \sqrt{d_A^2} = \sqrt{Z(A:B:C)} \, d_A,
\end{align*}
which implies that $Z(A:B:C) \geq \frac1{d_A^2d_B^2}$ and hence $G(A:B:C) \leq \log (d_A d_B)$.
By permutation invariance as in \cref{lem:properties G}, this also implies the other two upper bounds.
\end{proof}

The proof of \cref{lem:bounds G} shows that in fact~$G(A:B:C) \leq S_0(A) + S_2(B)$ (and permutations).

Let us compute $G(A:B:C)$ for two basic examples.
The first example is the GHZ state of dimension $d$ on three parties $A$, $B$, and $C$
\begin{align}\label{eq:ghz}
    \ket{\GHZ_d} = \frac{1}{\sqrt d} \sum_{i=0}^{d-1} \ket{iii}.
\end{align}
Then it is easy to see using \cref{eq:other permutations} that the normalization contributes a factor $d^{-4}$ whereas the trace contributes a factor $d$ so
\begin{align*}
  Z(A:B:C)_{\ket{\GHZ_d}}
= \bra{\GHZ_d}^{\ot 4} \parens*{ \pi^{(1)}_A \ot \pi^{(2)}_B \ot \id_C } \ket{\GHZ_d}^{\ot 4}
= d^{-4} \cdot d
= d^{-3},
\end{align*}
and hence
\begin{align*}
    G(A:B:C)_{\ket{\GHZ_d}} = \frac{3}{2} \log d.
\end{align*}
A second example is the case of purely bipartite entanglement.
For $\ket{\Psi_{ABC}} = \ket{\psi}_{AB} \ot \ket0_C$,
\begin{align*}
  Z(A:B:C)_{\ket\Psi}
  %\bra{\Psi}^{\ot 4} \parens*{ \pi^{(1)}_A \ot \pi^{(2)}_B \ot \id_C } \ket{\Psi}^{\ot 4}
= \bra{\psi}^{\ot 4} \parens*{ \pi^{(1)}_A \ot \pi^{(2)}_B } \ket{\psi}^{\ot 4}
= \bra{\psi}^{\ot 4} \parens*{ (\pi^{(1)} \pi^{(2)})_A \ot \id_B } \ket{\psi}^{\ot 4}
% &= \bra{\psi}^{\ot 4} \parens*{ \pi^{(3)}_A \ot \id_B } \ket{\psi}^{\ot 4}
= \parens*{ \tr \rho_A^2 }^2,
\end{align*}
and therefore
\begin{align*}
  G(A:B:C)
= S_2(A)
= \frac12 \parens*{ S_2(A) + S_2(B) + S_2(C) },
\end{align*}
where $S_2(\rho) = -\log\tr\rho^2$ is the R\'enyi-2 entropy.
It is easy to see using \cref{eq:tensor product} that the latter formula extends to an arbitrary purely bipartite entangled state of the form
\begin{align}\label{eq:bipartite}
  \ket\Psi_{ABC} = \ket{\psi_1}_{A_1 B_1} \ot \ket{\psi_2}_{A_2 C_1} \ot \ket{\psi_3}_{B_2 C_2}.
\end{align}
For any such state we have
\begin{align*}
  G(A:B:C)
= \frac12 \parens*{ S_2(A) + S_2(B) + S_2(C) }.
\end{align*}

A common tool in quantum information theory for one-shot entropic quantities is to perform \emph{smoothing}, which means that one optimizes a quantity of interest over all nearby states~\cite{tomamichel2015quantum}.
Recently, this has also found application in the context of holographic quantum gravity~\cite{akers2021leading,akers2022quantum} and random tensor networks~\cite{cheng2022random}.
For a pure state $\ket{\Psi}$ on a Hilbert space~$\HH$, let $B(\Psi, \eps)$ be the set of pure states $\ket{\widetilde\Psi}$ on $\HH$ which are close in trace distance, i.e., $T(\Psi,\widetilde\Psi) := \frac{1}{2}\norm{\ketbra{\widetilde\Psi}{\widetilde\Psi} - \ketbra{\Psi}{\Psi}}_1 \leq \eps$.
We introduce a smoothed version of $G(A:B:C)$ which depends on a parameter $0 < \eps < 1$ as
\begin{align}\label{eq:smoothed G}
    G^{\eps}(A : B : C)_{\ket{\Psi}} = \sup_{\ket{\widetilde\Psi} \in B(\Psi,\eps)} G(A : B : C)_{\ket{\widetilde\Psi}}.
\end{align}

%-----------------------------------------------------------------------------
\subsection{The Cayley distance on \texorpdfstring{$S_k$}{the symmetric group}}\label{subsec:cayley}
%-----------------------------------------------------------------------------
Given a permutation $\pi \in S_n$ we denote by $\abs{C(\pi)}$ the number of disjoint cycles that make up~$\pi$ (including fixed points).
% C(\pi) could denote the cycle type of $\pi$, i.e., a list of the lengths of the cycles that make up $\pi$, but we do not need this notation.
We may define a distance function, the \emph{Cayley distance}, on the symmetric group $S_n$ in the following way:
\begin{align*}
    d(\pi,\sigma) = k - \abs{C(\pi^{-1}\sigma)}.
\end{align*}
One can show that equivalently $d(\pi,\sigma)$ is equal to the minimal number of transpositions one has to apply to transform $\pi$ into $\sigma$.
That is, $d$ is the Cayley distance with respect to the generating set of~$S_n$ that consists of all transpositions.

We say that a permutation $\pi \in S_k$ is on a \emph{geodesic} between $\pi_1, \pi_2 \in S_k$ if the triangle inequality is saturated
\begin{align*}
    d(\pi_1, \pi) + d(\pi, \pi_2) = d(\pi_1, \pi_2).
\end{align*}

The Cayley distance and geodesics on the symmetric group are closely related to the theory of non-crossing partitions~\cite{biane1997some,nica2006lectures} and is useful in random matrix theory~\cite{mingo2017free}.
In a similar fashion, in the replica trick for holographic states and random tensor network states, the Cayley distance plays a crucial role.
As we will discuss in more detail in \cref{sec:rtn} and \cref{sec:gravity}, replica trick computations involve labeling each bulk region or tensor by a permutation~$\pi$, whereas the boundary regions are labeled by fixed permutations that depend on the quantity of interest.
One then minimizes the ``action''
\begin{align*}
   S= \sum_{i, j} d(\pi_i,\pi_j) \, \abs{\gamma_{ij}}
\end{align*}
where $\abs{\gamma_{ij}}$ is the area (or logarithmic bond dimension in a tensor network) of the surface between bulk regions labeled by~$i$ and~$j$ respectively.

As an example, which has been worked out in detail in~\cite{dong2021holographic}, consider computing
\[ \bra\Psi^{\ot n} (\tau_A \ot \tau^{-1}_B \ot \id_C) \ket{\Psi}^{\ot n}, \]
where $\tau \in S_n$ is a cyclic permutation; an analytic continuation of this quantity can be used to compute the entanglement negativity.
Crucially, there exist permutations $\pi$ (corresponding to so-called ``non-crossing pairings'') which are at the same time on a geodesic between $\id$ and $\tau$, between $\id$ and $\tau^{-1}$, and between $\tau$ and $\tau^{-1}$.
In the configuration that minimizes the action, the entanglement wedge of $A$ (the region bounded by $A$ and the minimal-area surface homologous to $A$) is labeled by~$\tau$, the entanglement wedge of $B$ is labeled by~$\tau^{-1}$, the entanglement wedge of $C$ by~$\id$, and everything else by an arbitrary permutation $\pi$ corresponding to a non-crossing pairing.
As a result, the action depends in leading order only on the areas of minimal surfaces.

On the other hand, for the quantity $G(A : B : C)$, the permutations in~\cref{eq:permutations for G} are \emph{incompatible}:\label{subsec:incompat}
any permutation $\pi$ which is on a geodesic between a pair of $\pi^{(1)}$, $\pi^{(2)}$, and $\pi^{(3)}$ is not on a geodesic between any other pair.
Moreover, $d(\pi^{(1)}, \pi^{(2)}) = d(\pi^{(1)},\pi^{(3)}) = d(\pi^{(2)},\pi^{(3)}) = 2$.
This structure suggests that the optimal bulk glueing is given by a partitioning of the bulk into three regions $\Gamma_A$, $\Gamma_B$, and $\Gamma_C$ with respective boundaries $A$, $B$, and $C$, and labeling $\Gamma_A$ by $\pi^{(1)}$, $\Gamma_B$ by $\pi^{(2)}$, and $\Gamma_C$ by $\pi^{(3)}$.
The regions $\Gamma_A$, $\Gamma_B$ and $\Gamma_C$ should then be such that the total boundary between these three regions is minimal -- in other words, they should make up a minimal tripartition.
In \cref{sec:rtn} we show that this is indeed the case for a random tensor network.
In \cref{sec:gravity} we argue that the same is true for fixed area states in holographic gravity and that by using the smoothed quantity $G^\eps(A:B:C)$ this result extends to general semiclassical states.

%=============================================================================
\section{Random tensor network states}\label{sec:rtn}
%=============================================================================
Random tensor network states are a useful toy model of holographic quantum gravity~\cite{hayden2016holographic}.
They have been used to investigate entanglement properties of holographic states~\cite{hayden2016holographic,yang2016bidirectional,qi2017holographic,qi2018spacetime,penington2022replica,nezami2020multipartite,dong2021holographic,dong2022replica,akers2022reflected,dutta2021canonical,kudler2022negativity,qi2022holevo,apel2022holographic,cheng2022random,akers2022reflected2}.

They can be defined as follows, see \cite{hayden2016holographic,cheng2022random} for a more extensive discussion of the definition and basic properties.
We first choose an arbitrary graph with vertices $V$ and edges $E$, with~$V$ a disjoint union of two sets called the boundary vertices~$V_\partial$ and the bulk vertices~$V_b$.
We denote by $D$ the bond dimension, and for each edge $e = (vw)$ we have Hilbert spaces of half-edges $\HH_{e,v} = \HH_{e,w} = \CC^D$.
Let
\begin{align*}
    \ket{\phi_e} = \frac{1}{\sqrt{D}} \sum_{i=1}^D \ket{ii} \in \HH_{e,v} \ot \HH_{e,w}
\end{align*}
be a maximally entangled state along the edge $e$.
At each vertex we let $\HH_v$ be the Hilbert space of all half-edges at the vertex $v$, which has dimension $D^{d(v)}$ where $d(v)$ is the degree of the vertex $v$.
For each bulk vertex $v \in V_b$ we choose a uniformly Haar random tensor $\ket{\psi_v}$ in the associated Hilbert space, and we define the random tensor network state to be given by
\begin{align}\label{eq:rtn}
    \ket{\Psi} = \left(I_{V_\partial} \ot \bigotimes_{v \in V_b} D^{\frac12 d(v)}\bra{\psi_v} \right) \bigotimes_{e \in E} \ket{\phi_e},
\end{align}
which is normalized in expectation.
See \cref{fig:rtn}~(a) for an illustration.

It is well-known that one can use the replica trick to compute certain observables of the random tensor network state in expectation.
That is, given boundary regions $A_1, \dots, A_m$ whose disjoint union equals $V_\partial$, and permutations $\pi^{(1)}, \dots, \pi^{(m)} \in S_k$ one may compute
\begin{align*}
  \EE \bra{\Psi}^{\ot n} \parens*{ \pi^{(1)}_{A_1} \ot \dots \ot \pi^{(m)}_{A_m} } \ket{\Psi}^{\ot n}
\end{align*}
as the partition function of a classical spin model at inverse temperature $\log(D)$.
We describe this spin model.
The spins take values in the permutation group~$S_n$ for each vertex in~$V$.
Then the configurations of the spin model are assignments $\{\pi_v\}_{v \in V}$ where $\pi_v \in S_n$, and where we impose the boundary conditions that $\pi_v = \pi^{(i)}$ for $v \in A_i$.
The energy of a configuration is given by the sum of the Cayley distances (\cref{subsec:cayley}) along all edges in the graph
\begin{align*}
    E(\{\pi_v\}_{v \in V}) =  \sum_{e =(vw) \in E} d(\pi_v, \pi_w).
\end{align*}
Then one can show that
\begin{align}\label{eq:partition function}
  \EE \bra{\Psi}^{\ot n} \parens*{ \pi^{(1)}_{A_1} \ot \dots \ot \pi^{(m)}_{A_m} } \ket{\Psi}^{\ot n}
= \sum_{\{\pi_v\}_{v \in V}} e^{-\log(D)E(\{\pi_v\}_{v \in V})}
\end{align}
For large $D$ we can approximate \cref{eq:partition function} by considering only the configuration(s) that minimize the energy.
That is, if $E_0$ is the ground state energy and $N_0$ is the ground state degeneracy,
\begin{align}\label{eq:ground state approximation}
  \EE \bra{\Psi}^{\ot n} \parens*{ \pi^{(1)}_{A_1} \ot \dots \ot \pi^{(m)}_{A_m} } \ket{\Psi}^{\ot n}
= D^{-E_0} \parens*{ N_0 + \bigO(D^{-1}) }
\end{align}
since the energy function takes integer values.

The most well-known application is the case where there are two regions~$A_1 = A$ and~$A_2 = B = \bar A$, and we take $\pi^{(1)}$ to be the full cycle~$\tau$ of length~$n$ and $\pi^{(2)} = \id$.
This can be used to compute the R\'enyi entropy $S_n(\rho_A)$.
Indeed, one can show from \cref{eq:ground state approximation} that one has at leading order in the bond dimension~$D$~\cite{hayden2016holographic}
\begin{align*}
    S_n(\rho_A) \approx \abs{\gamma_A} \log D
\end{align*}
where $\gamma_A$ is the size of a minimal cut that separates $A$ from $\bar A$ in the graph.
This mimics the Ryu-Takayanagi formula of holographic gravity.

\begin{figure}
\centering
\begin{subfigure}{.45\textwidth}
\begin{overpic}[width=.8\linewidth,grid=false]{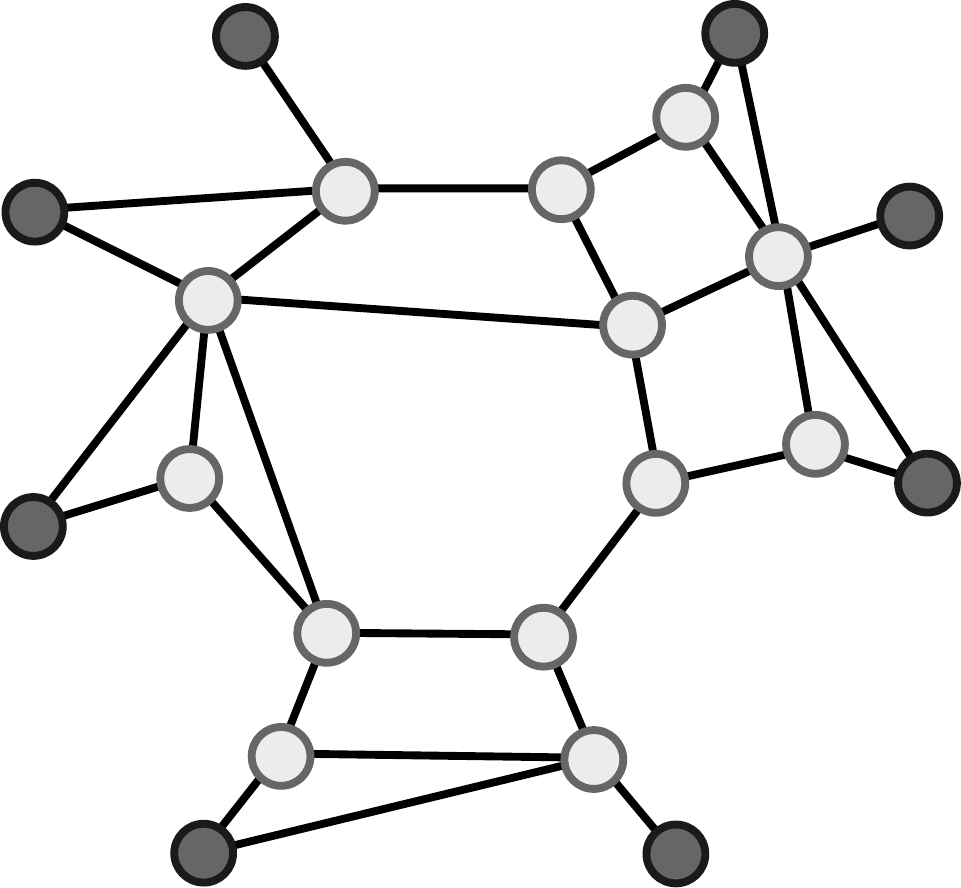}
    \put(10,80){\color{Black}{\Large{$V_{\partial}$}}}
    \put(50,40){\color{Gray}{\Large{$V_b$}}}
    \put(-10,2){(a)}
\end{overpic}
\end{subfigure}%
\hspace*{0.5cm}
\begin{subfigure}{.45\textwidth}
\begin{overpic}[width=.8\linewidth,grid=false]{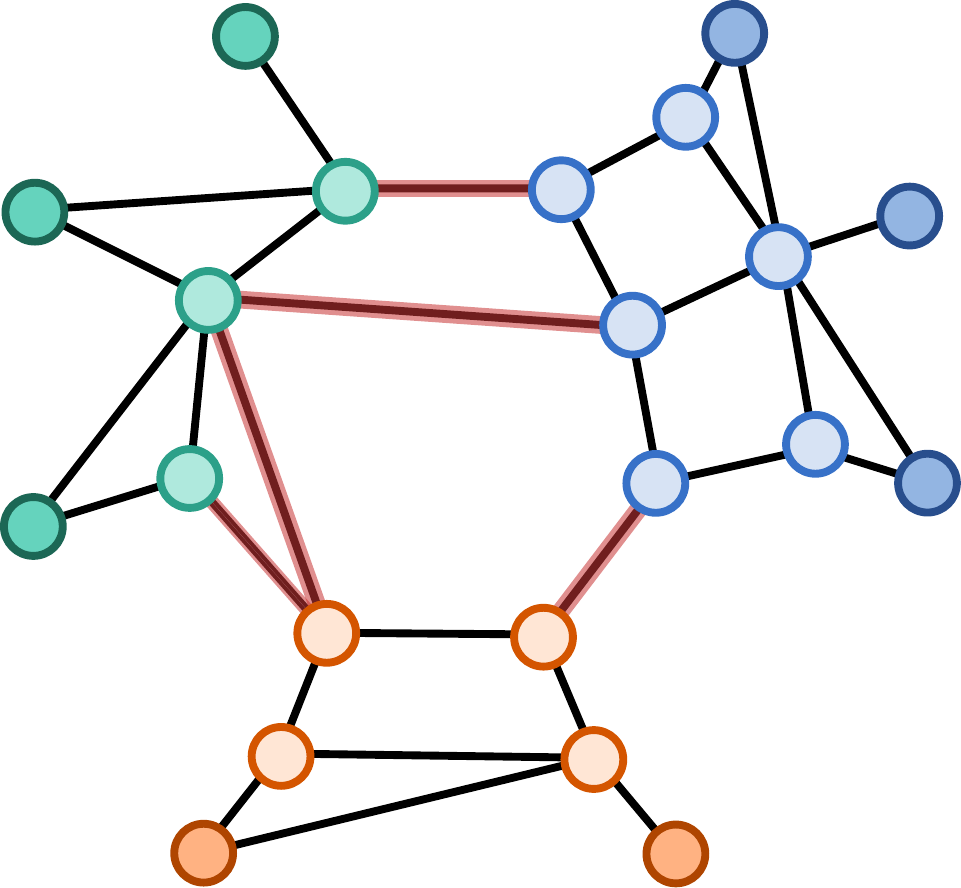}
    \put(10,80){\color{Green}{\Large{$A$}}}
    \put(90,80){\color{Blue}{\Large{$B$}}}
    \put(50,-2){\color{Bittersweet}{\Large{$C$}}}
    \put(5,55){\color{Green}{$\pi^{(1)}$}}
    \put(70,50){\color{Blue}{$\pi^{(2)}$}}
    \put(40,18){\color{Bittersweet}{$\pi^{(3)}$}}
    \put(35,51){\color{BrickRed}{\Large{$\gamma_{ABC}$}}}
    \put(-10,2){(b)}
\end{overpic}
\end{subfigure}

\caption{(a) A random tensor network state is constructed on a graph with bulk vertices~$V_b$ and boundary vertices~$V_{\partial}$.
Edges represent maximally entangled states, and we place random tensors at the bulk vertices.
The resulting state lives in a Hilbert space labeled by the boundary vertices.
(b)~Computing $G(A : B :C)$ for boundary subsystems~$A$, $B$, $C$ amounts to solving a spin model on the graph.
The optimal configuration is given by assigning the permutations~\cref{eq:permutations for G} to the respective parts of a minimal tripartition, as illustrated.
The corresponding energy is given by the size of the edge set~$\gamma_{ABC}$.\label{fig:rtn}}
\end{figure}

%-----------------------------------------------------------------------------
\subsection{The optimal configuration for \texorpdfstring{$G(A : B : C)$}{G} is a minimal tripartition}\label{subsec:opti}
%-----------------------------------------------------------------------------
We will now compute $G(A : B : C)$ for a random tensor network and show that it is computed by a minimal tripartition.
Given a partitioning of the boundary into three disjoint sets, $V_\partial = A \cup B \cup C$, a \emph{tripartition} (or \emph{3-terminal cut} or \emph{3-way cut}) for $A$, $B$, and $C$ is a partitioning of the vertices of the graph $V = \Gamma_A \cup \Gamma_B \cup \Gamma_C$ into disjoint subsets of vertices, such that $V_{\partial} \cap \Gamma_A = A$, $V_{\partial} \cap \Gamma_B = B$ and $V_{\partial} \cap \Gamma_C = C$.
Associated to each tripartition we have a set of edges~$\gamma_{AB}$ between $\Gamma_A$ and $\Gamma_B$ (that is, edges $(xy) \in E$ with $x \in \Gamma_A$ and $y \in \Gamma_B$) and similarly defined sets of edges~$\gamma_{BC}$ and~$\gamma_{AC}$.
Let $\gamma_{ABC} = \gamma_{AB} \cup \gamma_{BC} \cup \gamma_{AC}$.
A tripartition is \emph{minimal} if $\abs{\gamma_{ABC}} = \abs{\gamma_{AB}} + \abs{\gamma_{AB}} + \abs{\gamma_{AB}}$ is minimal among all tripartitions for $A$, $B$, and $C$.
In the following we assume that the minimal 3-terminal cut is unique.

To compute $G(A:B:C)$, we consider the replica trick in a situation where we have three boundary regions $A$, $B$, and $C$ and boundary conditions as in \cref{eq:permutations for G}.
We claim that the dominant configuration in the corresponding bulk spin model is determined by the minimal tripartition, as illustrated in \cref{fig:rtn}~(b).
% We note that while our setup is related to~\cite{akers2022reflected,akers2022reflected2} in that the replica trick considered here is a special case of the one use to compute the reflected entropy in these works, our results are new since here we analyze an arbitrary graph (with a unique minimal tripartition), while~\cite{akers2022reflected,akers2022reflected2} only consider graphs with at most two bulk vertices.

\begin{thm}\label{prp:minimal tripartition}
Suppose that the graph has a minimal tripartition $V = \Gamma_A \cup \Gamma_B \cup \Gamma_C$ for $A$, $B$, and $C$.
Then for $\pi^{(1)} = (12)(34)$, $\pi^{(2)} = (13)(24)$, and $\pi^{(3)} = (14)(23)$ as in \cref{eq:permutations for G},
\begin{align} \label{eq:expectedgeoffitivity}
\EE Z(A:B:C)_{\ket\Psi}
= \EE \bra\Psi^{\ot 4} \parens*{ \pi^{(1)}_A \ot \pi^{(2)}_B \ot \pi^{(3)}_C } \ket\Psi^{\ot 4}
= D^{-2 \abs{\gamma_{ABC}}} \parens*{ 1 + \bigO\mleft(D^{-1}\mright) }.
\end{align}
\end{thm}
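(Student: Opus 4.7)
My plan is to apply the Haar-integration spin-model formula sketched around~\cref{eq:partition function}: $\EE Z(A:B:C)$ becomes the partition function of a classical spin model on the tensor-network graph, with spins valued in $S_4$, energy $E(\{\pi_v\}) = \sum_{(vw) \in E} d(\pi_v, \pi_w)$, and boundary conditions $\pi_v = \pi^{(i)}$ for $v \in A_i$. By the large-$D$ asymptotic~\cref{eq:ground state approximation}, the theorem reduces to establishing $E_0 = 2|\gamma_{ABC}|$ and $N_0 = 1$, where $E_0$ and $N_0$ denote the ground-state energy and degeneracy of this spin model.

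The upper bound $E_0 \leq 2|\gamma_{ABC}|$ is immediate from the \emph{tripartition configuration} $\pi_v^\star = \pi^{(i)}$ for $v \in \Gamma_i^{\min}$, using the unique minimum tripartition $(\Gamma_A^{\min}, \Gamma_B^{\min}, \Gamma_C^{\min})$: every cut edge of $\gamma_{ABC}^{\min}$ contributes $d(\pi^{(i)}, \pi^{(j)}) = 2$ by the pairwise-distance computation in~\cref{subsec:incompat}, and all other edges contribute zero, so $E(\{\pi_v^\star\}) = 2|\gamma_{ABC}|$.

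The lower bound $E_0 \geq 2|\gamma_{ABC}|$ together with uniqueness is the heart of the argument. My strategy is first to reduce to spin configurations taking values in the three-element set $\{\pi^{(1)}, \pi^{(2)}, \pi^{(3)}\}$, and then to identify the energy of such a configuration with twice the cut size of the tripartition it induces. For the reduction, I would iteratively modify any configuration by eliminating vertices whose spin is outside $\{\pi^{(1)}, \pi^{(2)}, \pi^{(3)}\}$: given such a vertex $v$ (or, more robustly, a connected region $R$ of such vertices), flipping the region to a suitably chosen $\pi^{(i)}$ weakly decreases the total energy, by a triangle-inequality argument that crucially uses the incompatibility from~\cref{subsec:incompat}. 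Once the configuration is $\{\pi^{(1)}, \pi^{(2)}, \pi^{(3)}\}$-valued, the spin assignment \emph{is} a tripartition respecting the boundary, each cut edge contributes exactly $2$, and the energy is $2|\gamma|$, minimised at $2|\gamma_{ABC}|$ with unique minimiser the minimum tripartition.

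The main obstacle is making the reduction rigorous, since a naive local nearest-$\pi^{(i)}$ rounding can fail at ``equidistant'' spins like the identity $e$ (at Cayley distance $2$ from each $\pi^{(i)}$), where every single-vertex flip can strictly increase the energy. I expect this to be handled by a global rearrangement argument: any connected region $R$ of spins outside $\{\pi^{(1)}, \pi^{(2)}, \pi^{(3)}\}$ can be absorbed into an adjacent $\pi^{(i)}$-region without increasing the energy, because the incompatibility property forces any ``thickened wall'' crossing from $\pi^{(i)}$ to $\pi^{(j)}$ through intermediate permutations to cost at least $d(\pi^{(i)}, \pi^{(j)}) = 2$ per unit length by triangle inequality, and this lower bound is already saturated by the direct $\pi^{(i)} \leftrightarrow \pi^{(j)}$ wall. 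With $E_0 = 2|\gamma_{ABC}|$ and $N_0 = 1$ established, all remaining configurations have integer energy at least $2|\gamma_{ABC}| + 1$, and summing via~\cref{eq:ground state approximation} gives the claimed $1 + \bigO(D^{-1})$ correction.
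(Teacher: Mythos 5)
Your framework (spin model, upper bound from the tripartition configuration, reduction of the theorem to $E_0 = 2\abs{\gamma_{ABC}}$ and $N_0 = 1$) matches the paper's, but the key step --- the lower bound via absorbing each connected ``bad'' region into a single adjacent phase --- is not just unproven, it is false. Here is a concrete local obstruction. Let $\sigma_{jk}$ denote a $4$-cycle that is a geodesic midpoint between $\pi^{(j)}$ and $\pi^{(k)}$, so $d(\sigma_{jk},\pi^{(j)})=d(\sigma_{jk},\pi^{(k)})=1$, $d(\sigma_{jk},\pi^{(l)})=3$ for the third index, and $d(\sigma_{jk},\sigma_{j'k'})=2$ for distinct midpoints (their ratio is a nontrivial even permutation). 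Take a connected bad region $R=\{u_{12},u_{13},u_{23}\}$ forming a path $u_{12}-u_{13}-u_{23}$, with $\pi_{u_{jk}}=\sigma_{jk}$, where $u_{jk}$ has $p$ edges to vertices carrying $\pi^{(j)}$ and $p$ edges to vertices carrying $\pi^{(k)}$. The energy of all edges meeting $R$ is $6p\cdot 1 + 2\cdot 2 = 6p+4$, while flipping \emph{all} of $R$ to any single $\pi^{(i)}$ gives $2p\cdot 0 + 4p\cdot 2 = 8p$, a strict increase for every choice of $i$ as soon as $p\geq 3$. So the monotone iteration you propose stalls; the correct move here is a \emph{mixed} rounding ($u_{12},u_{13}\mapsto\pi^{(1)}$, $u_{23}\mapsto\pi^{(2)}$, giving $6p+2$), and proving that some such refined rounding always exists and never increases the energy is precisely the hard content you have deferred. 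A secondary issue: even a correct weakly-decreasing rounding would only show that every ground state maps to the tripartition configuration, not that it \emph{is} that configuration, so $N_0=1$ would need a separate argument (your single-vertex geodesic example, where the flip leaves the energy unchanged, shows ties do occur locally).

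The paper avoids rounding altogether. From an arbitrary configuration $\{\pi_v\}$ it constructs \emph{two} tripartitions $\Delta$ and $\tilde\Delta$ (roughly, $\Delta$ draws the boundary of the $\pi^{(1)}$-domain tightly around $D(\pi^{(1)})$ while $\tilde\Delta$ draws the boundary of the $\id$-domain tightly around $D(\id)$, with the geodesic-midpoint spins assigned to the $B$-side in a complementary way), and then shows by a case analysis on the Cayley distances that every edge lying in both cut sets $\delta\cap\tilde\delta$ costs at least $2$. This yields $E(\{\pi_v\})\geq\abs{\delta_{ABC}}+\abs{\tilde\delta_{ABC}}\geq 2\abs{\gamma_{ABC}}$ directly, and the uniqueness of the minimal tripartition forces $\Delta=\tilde\Delta=\Gamma$ and hence $N_0=1$ in the same stroke. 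In your example above this mechanism is visible: the two extracted tripartitions differ (they disagree on where to put the midpoint spins), and the energy pays for both cuts. I would recommend either adopting that two-cut argument or finding a provably monotone \emph{non-local} rounding; the whole-connected-region version cannot be repaired.
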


\begin{proof}
By \cref{eq:other permutations}, we may instead compute
\begin{align*}
  \EE \bra\Psi^{\ot 4} \parens*{ \pi^{(1)}_A \ot \pi^{(2)}_B \ot \id_C } \ket\Psi^{\ot 4}.
\end{align*}
We will show that the optimal configuration $\{\pi_v\}_{v \in V}$ for these boundary conditions is given by $\pi_v = \pi^{(1)}$ for $v \in \Gamma_A$, $\pi_v = \pi^{(2)}$ for $v \in \Gamma_B$ and $\pi_v = \id$ for $v \in \Gamma_C$.
Since $d(\pi^{(1)},\pi^{(2)}) = d(\pi^{(1)},\id) = d(\pi^{(2)},\id) = 2$ this configuration has energy
\begin{align*}
    E(\{\pi_v\}_{v \in V}) = 2\abs{\gamma_{ABC}},
\end{align*}
so this will imply \cref{eq:expectedgeoffitivity} at once.

To see that the configuration described above is the optimal one and that there are no other ones, consider an arbitrary configuration $\{\pi_v\}_{v \in V}$ subject to the boundary conditions.
For $\pi \in S_4$, denote the corresponding domain by
\begin{align*}
    D(\pi) = \{v \in V : \pi_v = \pi \}.
\end{align*}
Then we define the following two tripartitions:
\begin{align*}
    \Delta_A &= D(\pi^{(1)}), &
    \Delta_B &= D(\pi^{(2)}) \;\;\cup\;\; \bigcup_{\mathclap{\substack{\pi \,:\, d(\pi,\pi^{(2)}) \\ = d(\pi,\pi^{(1)}) = 1}}} D(\pi), &
    \Delta_C &= V \setminus (\Delta_A \cup \Delta_B)
\end{align*}
and
\begin{align*}
    \tilde\Delta_A &= V \setminus (\tilde\Delta_B \cup \tilde\Delta_C), &
    \tilde\Delta_B &= D(\pi^{(2)}) \;\;\cup\;\; \bigcup_{\mathclap{\substack{\pi \,:\, d(\pi,\pi^{(2)}) = 1, \\ d(\pi,\pi^{(1)}) \neq 1}}} D(\pi), &
    \tilde\Delta_C &= D(\id).
\end{align*}
Denote the associated edge sets by $\delta_{ABC}$ and $\tilde \delta_{ABC}$, respectively.
We claim that
\begin{align}\label{eq:lower bound energy}
  E(\{\pi_v\}_{v \in V})
\geq \sum_{e = (vw) \in \delta \cup \tilde\delta} d(\pi_v, \pi_w)
\geq \abs{\delta_{ABC}} + \abs{\tilde \delta_{ABC}}.
\end{align}
This would confirm the optimality of the configuration described above, as $\gamma_{ABC}$ is the edge set of a \emph{minimal} tripartition and hence we must have
\begin{align}\label{eq:lower bound energy 2}
    E(\{\pi_v\}_{v \in V}) \geq \abs{\delta_{ABC}} + \abs{\tilde\delta_{ABC}} \geq 2\abs{\gamma_{ABC}}.
\end{align}
In fact, the above also implies that the described configuration is the \emph{unique} optimal configuration.
To see this, note that since we assumed that the minimal tripartition is unique, we can have equality in \cref{eq:lower bound energy 2} only if $\Delta_A = \tilde \Delta_A = \Gamma_A$ etc., and hence~$\delta_{ABC} = \tilde \delta_{ABC} = \gamma_{ABC}$.
Moreover, in that case we can have equality in \cref{eq:lower bound energy} only if for all edges~$(vw)$ which are not in $\gamma_{ABC}$ we have $\pi_v = \pi_w$.
This implies that a configuration with equality in \cref{eq:lower bound energy 2} and hence in \cref{eq:lower bound energy} has to be constant on the connected components of $\Gamma_A$, $\Gamma_B$ and $\Gamma_C$.
While $\Gamma_A$ need not itself be connected, each connected component must have a nonempty intersection with $A$ (otherwise, it is easy to see that assigning this component to~$\Gamma_B$ or~$\Gamma_C$ gives rise to another minimal tripartition which contradicts the uniqueness of the minimal tripartition), and similarly for $\Gamma_B$ and $\Gamma_C$.
Given the boundary conditions on $A$, $B$, and~$C$ this implies that the unique configuration with equality in \cref{eq:lower bound energy 2} has to be constant on~$\Gamma_A$, $\Gamma_B$, and~$\Gamma_C$, with $\pi_v = \pi^{(1)}$ for~$v \in \Gamma_A$, $\pi_v = \pi^{(2)}$ for~$v \in \Gamma_B$, and $\pi_v = \id$ for~$v \in \Gamma_C$.

We now set out to prove \cref{eq:lower bound energy}.
First we note that
\begin{align}
% \nonumber
  E(\{\pi_v\}_{v \in V})
% &\geq \sum_{e = (vw) \in \delta \cup \tilde\delta} d(\pi_v, \pi_w) \\
\label{eq:split energy}
&= \sum_{e = (vw) \in \delta \setminus \tilde \delta} d(\pi_v, \pi_w) + \sum_{e = (vw) \in \tilde \delta \setminus \delta} d(\pi_v, \pi_w) + \sum_{e = (vw) \in \delta \cap \tilde \delta} d(\pi_v, \pi_w).
\end{align}
% with equality if and only if $\pi_v$ is constant on intersections of the sets $\Delta_A$, $\Delta_B$, $\Delta_C$ and $\tilde \Delta_A$, $\tilde \Delta_B$, $\tilde \Delta_C$.
From \cref{eq:split energy} it follows that if for each edge $(vw) \in \delta \cap \tilde \delta$ we have $d(\pi_v,\pi_w) \geq 2$, then \cref{eq:lower bound energy} is valid.
So, we consider $(vw) \in \delta \cap \tilde \delta$.
There must be at least one of $v,w \in \Delta_A \cup \Delta_B$ and one of $v,w \in \tilde \Delta_B \cup \tilde \Delta_C$.
We distinguish the following cases:
\begin{enumerate}
\item \emph{One of $v,w \in \Delta_A$ and one of $v,w \in \tilde \Delta_C$}:
Note that $\Delta_A$ and $\tilde\Delta_C$ are disjoint, so we may assume without loss of generality that $v \in \Delta_A$ and $w \in \tilde \Delta_C$.
This means $\pi_v = \pi^{(1)}$ and $\pi_w = \id$, and hence $d(\pi_v,\pi_w) = 2$.

\item \emph{One of $v,w \in \Delta_A$ and one of $v,w \in \tilde \Delta_B$}:
Note that $\Delta_A$ and $\tilde\Delta_B$ are likewise disjoint (if $d(\pi_w,\pi^{(2)}) = 1$ then we cannot have $\pi_w = \pi^{(1)}$), so we may assume without loss of generality that $v \in \Delta_A$ and $w \in \tilde\Delta_B$.
Then $\pi_v = \pi^{(1)}$, and $\pi_w$ either equals $\pi^{(2)}$ or satisfies $d(\pi_w,\pi^{(1)}) \neq 1$ and is not the same as $\pi_v$.
Clearly, in both cases $d(\pi_v,\pi_w) \geq 2$.

\item \emph{One of $v,w \in \Delta_B$ and one of $v,w \in \tilde \Delta_C$}:
Again $\Delta_B$ and $\tilde\Delta_C$ are disjoint, so we may assume without loss of generality that $v \in \tilde \Delta_C$ and $w \in \Delta_B$.
Therefore $\pi_v = \id$.
On the other hand, $\pi_w$ either equals $\pi^{(2)}$ (in which case $d(\pi_v,\pi_w) = d(\id,\pi^{(2)}) = 2$) or satisfies $d(\pi_w,\pi^{(2)}) = d(\pi_w,\pi^{(1)}) = 1$ (but then $d(\pi_v,\pi_w) \geq 2$ since no permutation is simultaneously on geodesics between $\id$, $\pi^{(1)}$ and $\pi_2$, as discussed earlier).

\item \emph{One of $v,w \in \Delta_B$ and one of $v,w \in \tilde \Delta_B$:}
Note that now $\Delta_B \cap \tilde\Delta_B = D(\pi^{(2)})$.
There are two cases to consider.
First suppose that one of the two permutations equals $\pi^{(2)}$, and let us assume without loss of generality this is~$\pi_v$.
Then, since $(vw) \in \delta \cap \tilde\delta$, we must have $\pi_w \not\in \Delta_B \cup \tilde\Delta_B$, and hence $d(\pi_w, \pi^{(2)}) \geq 2$.
The other possibility is that neither permutation equals $\pi^{(2)}$.
In that case, one of the vertices must be in $\Delta_B$ and the other in $\tilde\Delta_B$.
Without loss of generality, we may assume that $v \in \Delta_B$ and $w \in \tilde\Delta_B$.
Thus, $d(\pi_v,\pi^{(2)}) = d(\pi_v,\pi^{(1)}) = 1$ and $d(\pi_w,\pi^{(2)}) = 1$, $d(\pi_w,\pi^{(1)}) \neq 1$.
Since $\pi_v$ and~$\pi_w$ have the same parity, we cannot have $d(\pi_v,\pi_w) = 1$.
Hence $d(\pi_v, \pi_w) = 2$.
\qedhere
\end{enumerate}
\end{proof}

\Cref{prp:minimal tripartition} computes an expectation value, and a natural question is whether the random variable $\bra{\Psi}^{\ot 4}(\pi^{(1)}_A \ot \pi^{(2)}_B \ot \pi^{(3)}_C) \ket{\Psi}^{\ot 4}$ concentrates around its expected value.
To see that this is indeed the case, one can compute the variance, similar as to one does for the entropy in random tensor networks~\cite{hayden2016holographic,nezami2020multipartite,cheng2022random}.
To do so, we note that
\begin{align*}
    \left(\bra{\Psi}^{\ot 4}(\pi^{(1)}_A \ot \pi^{(2)}_B \ot \pi^{(3)}_C) \ket{\Psi}^{\ot 4}\right)^2 = \bra{\Psi}^{\ot 8}(\tilde \pi^{(1)}_A \ot \tilde\pi^{(2)}_B \ot \tilde \pi^{(3)}_C) \ket{\Psi}^{\ot 8}
\end{align*}
where $\tilde \pi^{(1)} = (1 2)(3 4)(5 6)(7 8)$, $\tilde \pi^{(2)} = (1 3)(2 4)(5 7)(6 8)$ and $\tilde \pi^{(3)} = (1 4)(2 3)(5 8)(6 7)$.
By similar reasoning as in the proof of \cref{prp:minimal tripartition} one finds that
\begin{align*}
    \EE \left(\bra{\Psi}^{\ot 4}(\pi^{(1)}_A \ot \pi^{(2)}_B \ot \pi^{(3)}_C) \ket{\Psi}^{\ot 4}\right)^2
  = D^{-4\abs{\gamma_{ABC}}} \parens*{ 1 + \bigO(D^{-1}) }.
\end{align*}
This implies that for large bond dimension~$D$, with high probability,
\begin{align*}
  G(A : B : C) \approx \abs{\gamma_{ABC}}\log D.
\end{align*}
% which is the main result of this section.

%-----------------------------------------------------------------------------
\subsection{Random tensor networks with bulk entropy}\label{sec:rtn bulk}
%-----------------------------------------------------------------------------
It is also possible to model the entropy of bulk quantum fields in random tensor networks~\cite{hayden2016holographic}.
This can be done by adapting the construction in \cref{eq:rtn} as follows.
One allows an additional bulk degree of freedom at each bulk vertex, so $\HH_v = \HH_{v,\bulk} \ot \bigotimes_{e = (vw)} \HH_{e,v}$, considers a state $\ket{\psi_{\bulk}} \in \bigotimes_{v \in V} \HH_{v,\bulk}$, and defines the boundary random tensor network state as
\begin{align*}
    \ket{\Psi}
  = \left(I_{V_\partial} \ot \bigotimes_{v \in V_b} D(v)^{\frac12} \bra{\psi_v} \right) \left( \ket{\psi_{\bulk}} \ot \bigotimes_{e \in E} \ket{\phi_e} \right),
\end{align*}
where $D(v) = \dim \HH_{v}$, which is a straightforward generalization of \cref{eq:rtn}.
This also gives rise to a classical statistical mechanics model for $\EE \bra\Psi^{\ot 4}(\pi^{(1)}_A \ot \pi^{(2)}_B \ot \id_C) \ket\Psi^{\ot 4}$, see~\cite{hayden2016holographic,cheng2022random} for details.
In the regime where $\dim \HH_{v,\bulk} \ll D$ it is easy to see that the optimal configuration is independent of the bulk state and we obtain
\begin{align*}
&\quad \EE \bra\Psi^{\ot 4} \parens*{ \pi^{(1)}_A \ot \pi^{(2)}_B \ot \pi^{(3)}_C } \ket\Psi^{\ot 4} \\
&= D^{-2 \abs{\gamma_{ABC}}} \parens*{ \bra{\psi_\text{bulk}}^{\ot 4} \parens*{ \pi^{(1)}_{\Gamma_A} \ot \pi^{(2)}_{\Gamma_B} \ot \pi^{(3)}_{\Gamma_C} } \ket{\psi_\text{bulk}}^{\ot 4} + \bigO\mleft(D^{-1}\mright) },
\end{align*}
where $V = \Gamma_A \cup \Gamma_B \cup \Gamma_C$ is the minimal tripartition.
Hence, for large~$D$,
\begin{align}\label{eq:rtns with bulk}
    G(A : B : C)_{\ket{\Psi}} \approx \abs{\gamma_{ABC}}\log D + G(\Gamma_A : \Gamma_B : \Gamma_C)_{\ket{\psi_{\bulk}}},
\end{align}
which shows~$G$ is corrected by same quantity $G$ but computed for the bulk states and the minimal tripartition.
This is analogous to the FLM correction of holographic entropy~\cite{faulkner2013quantum}.

% In the special case where the bulk state is a tensor product $\rho_{\bulk} = \bigotimes_{v \in V} \rho_{v,\bulk}$ over the vertices, we have $G(\Gamma_A : \Gamma_B : \Gamma_C)_{\rho_{\bulk}} = S_2(\rho_{\bulk})$.

%=============================================================================
\section{Holographic gravity}\label{sec:gravity}
%=============================================================================
We have seen how to compute $G(A:B:C)$ for a random tensor network state.
We would like to use these insights to study the same replica trick for holographic gravity states.
We start with fixed-area states, which are closely related to random tensor network states, and therefore $G(A:B:C)$ is computed by a bulk minimal tripartition.
Next we proceed to general semiclassical states, where we see that the smoothed version $G^{\eps}(A:B:C)$ is computed by a bulk minimal tripartition.

%-----------------------------------------------------------------------------
\subsection{Fixed-geometry states}\label{sec:fixedarea}
%-----------------------------------------------------------------------------
There is a very close relationship between replica computations of the expectation value of a products of permutation operators on boundary subsystems in random tensor network states and the analogous computations for gravitational states in AdS/CFT where the bulk geometry or, at least, aspects of the bulk geometry are fixed~\cite{dong2019flat,akers2019holographic,penington2022replica}.

Since most of the argument is standard, we will only sketch it here, and refer readers to the original literature for more details.
We consider a boundary state $\ket{\Psi}$, prepared by a Euclidean path integral and postselected onto some fixed bulk spatial geometry $g$ on the time-reflection symmetric slice.
To carry out the desired replica computation on $k$ replices we glue the boundary Euclidean path integrals preparing $n$ bras $\bra{\Psi}$ and $n$ kets $\ket{\Psi}$ to each other -- with the desired permutations of boundary subregions inserted.
This produces a boundary partition function $Z_\mathrm{repl}$ that computes the expectation value of the permutation up to normalization, as shown in \cref{fig:replica trick}.

In general, a state $\ket{\Psi}$ prepared by a Euclidean path integral will not be normalized.
When computing $G(A:B:C)$ (or other replica trick quantities such as R\'{e}nyi entropies), expectation values should be computed using the normalized state $\ket{\widehat\Psi} = \ket{\Psi} / \sqrt{\braket{\Psi|\Psi}}$.
The normalization can also be computed using a partition function $Z_1 = \braket{\Psi|\Psi}$ formed by gluing together a single pair of bra and ket partition functions.
The normalized expectation value of a permutation operator on $n$ bras and $n$ kets is given by
\begin{align} \label{eq:normalpf}
    \widehat Z_\mathrm{repl} = \frac{Z_\mathrm{repl}}{Z_1^n}.
\end{align}

\begin{figure}
\centering
\begin{subfigure}{.4\textwidth}
\begin{overpic}[width=.8\linewidth,grid=false]{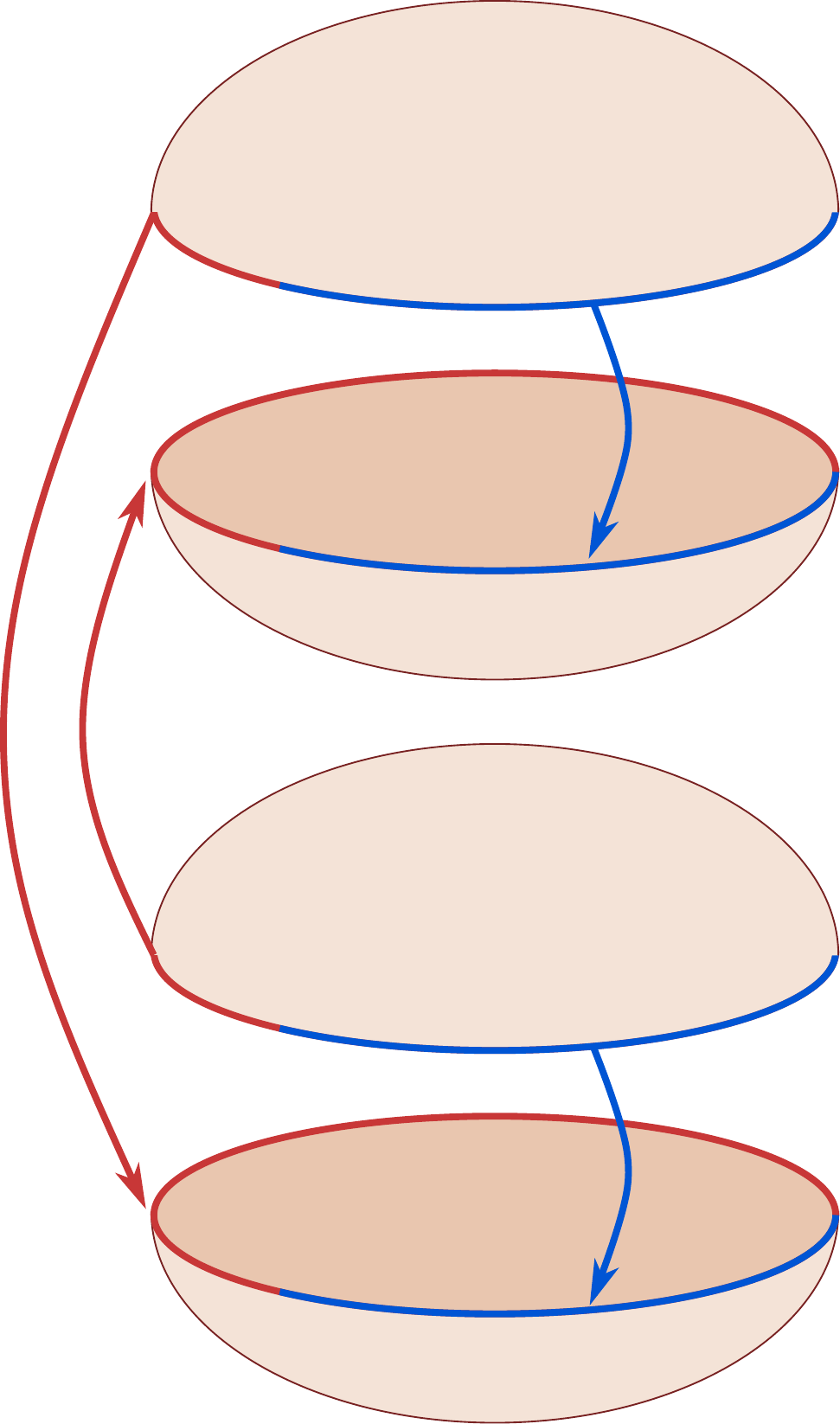}
    \put(10,73){\color{Bittersweet}{\Large{$A$}}}
    \put(30,62){\color{Blue}{\Large{$\bar{A}$}}}
    \put(31,34){$\bra{\Psi}$} \put(31,13){$\ket{\Psi}$}
    \put(-5,2){(a)}
\end{overpic}
\end{subfigure}%
\hspace*{0.5cm}
\begin{subfigure}{.5\textwidth}
\begin{overpic}[width=.8\linewidth,grid=false]{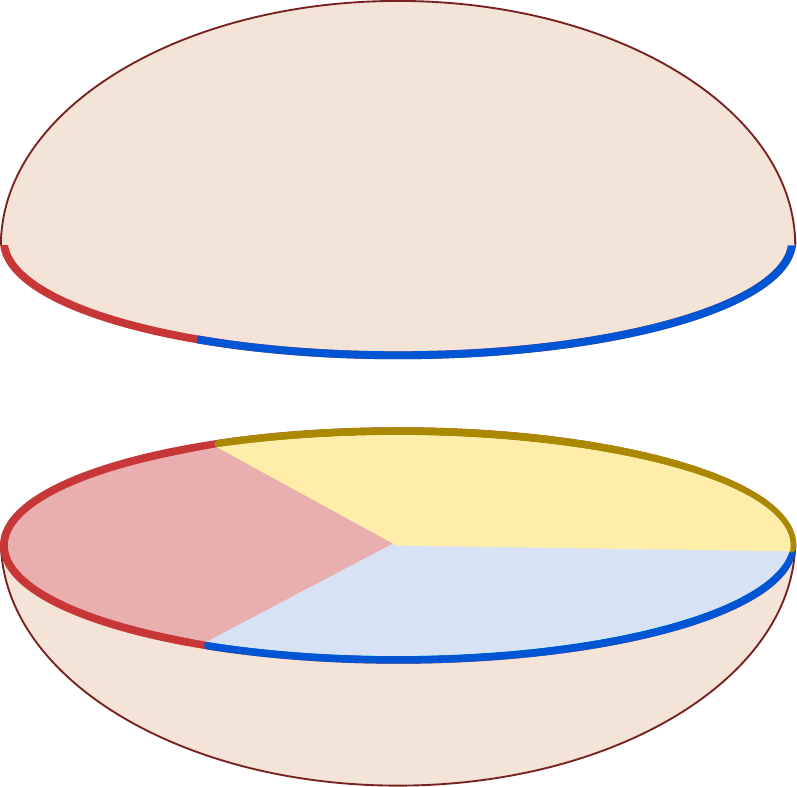}
    \put(-10,30){\color{Bittersweet}{\Large{$A$}}}
    \put(90,5){\color{Blue}{\Large{$B$}}}
    \put(80,45){\color{BurntOrange}{\Large{$C$}}}
    \put(18,28){\color{Bittersweet}{$\pi^{(1)}$}}
    \put(57,21){\color{Blue}{$\pi^{(2)}$}}
    \put(57,34){\color{BurntOrange}{$\pi^{(3)}$}}
    \put(94,90){\Large{$\times 4$}}
    \put(-10,2){(b)}
\end{overpic}
\end{subfigure}

\caption{(a)~Replica trick for the R\'enyi-2 entropy, which uses $n = 2$ copies and permutation~$(1 2)$ on region~$A$ and $\id$ on the complement $\bar{A}$.
(b)~The replica trick for $G(A : B : C)$ uses $n=4$ copies and~$\pi^{(1)} = (12)(34)$ on~$A$, $\pi^{(2)} = (13)(24)$ on~$B$, and $\pi^{(3)} = (14)(23)$ on~$C$.
The dominant bulk glueing prescription for a fixed geometry state is given by a minimal tripartition.}
\label{fig:replica trick}
\end{figure}

The AdS/CFT dictionary maps this replicated boundary partition function to a bulk partition function with asymptotic boundary conditions given by the boundary geometry (and boundary matter sources) obtained through this gluing procedure.
In the limit $G_N \to 0$, this bulk partition function can be evaluated semiclassically.
For general boundary states the saddle points of the bulk partition function are hard to find, but for states where the bulk geometry on a spatial slice is fixed, they turn out to have a simple form.

Consider as a toy example a single particle path integral where the position $x$ of the particle at some time $t_0$ is measured to be $x_0$.
Because of the measurement, we do not integrate over $x$ at time $t_0$, and hence a saddle point of the path integral does not need to obey the equation of motion for $x$ at time $t_0$.
Instead the conjugate momentum $p = m \partial_t x$ may jump discontinuously at time $t_0$ as needed in order to match the initial and final boundary conditions, along with the condition $x(t_0) = x_0$.

Exactly the same happens in the gravitational path integral.
Because the bulk geometry on a spatial slice in each bra or ket has been measured, the saddle point of the bulk partition function does not have to obey the equations of motion at the measured slice, and in particular the conjugate momenta to the spatial geometry, which involve time derivatives of the spatial metric, may be discontinuous across the slice.
As a result, the bulk geometry connecting the fixed geometry slice to each asymptotic boundary piece of the asymptotic boundary preparing a single bra or ket, does not care about the gluing procedure at all.
The same geometry is present in both the replica trick partition function $Z_\mathrm{repl}$ and in the partition function $Z_1$ used to compute the normalization $\braket{\Psi|\Psi}$.

The only question that remains in order to determine the entire bulk saddle, is how to glue together the fixed geometry slices of the $n$ bras and $n$ kets.
The choice of how to do this gluing in practice amounts to a choice of map from the bulk spatial geometry to the permutation group -- and is analogous to the choice over permutations associated to each vertex in the statistical mechanics model~\cref{eq:partition function} computing expectation values in a random tensor network.
In other words, such a gluing corresponds precisely to a division of the bulk geometry into domains~$D(\pi)$ for~$\pi \in S_n$ where we glue along~$\pi$ in the region~$D(\pi)$, with appropriate boundary conditions (i.e., if a boundary subregion~$A$ is assigned a permutation~$\pi$ in the CFT replica trick, the bulk region~$D(\pi)$ has region~$A$ as its conformal boundary).

To find the dominant saddle point we need to find the map that minimizes the total gravitational action.
This action has two contributions.
The first comes from the spacetime away from the fixed geometry slice.
This contribution is exactly $n$ times the action of the saddle point computing $Z_1 = \braket{\Psi|\Psi}$.
It therefore is (a)~independent of the choice of domains $D(\pi)$, and (b)~exactly cancels the action of the denominator $Z_1^n$ in the normalized formula~\eqref{eq:normalpf}.

The second contribution involves singularities in the geometry (from discontinuities in the extrinsic curvature) on the fixed geometry slices that differ in the numerator and denominator of \cref{eq:normalpf}.
These take the form of conical singularities at domain walls in the permutation map, i.e., the boundaries between regions $D(\pi)$ and $D(\sigma)$ for different permutations.%
\footnote{In fact, since these conical singularities are the only place where the equations of motion are not satisfied, it is sufficient to only fix the total area of each domain wall, rather than the entire bulk spatial geometry, in order to have the saddle point spacetime geometry not backreact relative to the $\braket{\Psi|\Psi}$ computation.
For this reason, it is common to talk about a correspondence between tensor networks and fixed-area (rather than fixed-geometry) states.
Of course, to determine the surfaces whose areas need to be fixed, one needs to know in advance where the domain walls will be!}
The action of these conical singularities is given by
\begin{align*}
S_{\text{con}} = \frac{1}{8 \pi G_N} \int dy^{d-2} \sqrt{h} \, \parens*{ \chi(y) - 2\pi }
\end{align*}
where the coordinates $y$ parameterise the domain wall, $dy^{d-2}\sqrt{h}$ is the volume form, and~$\chi(y)-2\pi$ is the conical excess at $y$.
For full generality we allow the original saddle point geometry for $Z_1$ to have a conical singularity with excess $\chi_0(y) - 2\pi$.
In the denominator of \cref{eq:normalpf}, we have $n$ copies of this conical singularity.
In the numerator however, the gluing rules mean that the number of copies of the singularity is given by the number of cycles $\abs{C(\pi^{-1}\sigma)} = n - d(\sigma,\pi)$ in the permutation $\pi^{-1}\sigma$ associated to the domain between $D(\pi)$ and $D(\sigma)$.
The conical singularity associated to a cycle $c$ has excess $(\ell_c\chi_0(y) - 2 \pi)$ where $\ell_c$ is the length of the cycle.
The additional action of the numerator relative to the denominator is therefore
\begin{align}\label{eq:DeltaScon}
  \Delta S_{\text{con}}
= \frac{1}{8\pi G_N}\sum_c \int dy^{d-2} \sqrt{h} \, 2 \pi \parens*{ \ell_c - 1}
= d(\pi,\sigma) \frac{\area(\pi,\sigma)}{4 G_N}.
\end{align}
where the sum is over all cycles in the permutation~$\pi^{-1}\sigma$ and $\area(\pi,\sigma)$ is the area of the domain wall between $D(\pi)$ and $D(\sigma)$; see~\cite{dong2019flat,akers2019holographic,penington2022replica} for details in this derivation.
This is the only part of the action that does not exactly cancel between the numerator and denominator of \cref{eq:normalpf}, and the only part that depends on the choice of domains~$D(\pi)$.
The domains~$D(\pi)$ are therefore determined by minimizing \cref{eq:DeltaScon}.
This procedure is exactly analogous to the random tensor network computations in \cref{sec:rtn}, if the logarithms of the bond dimensions that pass through the domain wall are replaced by geometrical area.

Recall that the boundary conditions for the permutation map are determined by the permutations applied at the asymptotic boundary (i.e., the permutations that appear in the replica computation of interest).
For R\'{e}nyi entropy computations, the boundary conditions consist of a trivial region $\bar{A}$ where no permutation is applied, and a region $A$ where a cyclic permutation is applied, as in \cref{fig:replica trick}~(a).
The dominant bulk configuration has a single domain wall lying on the minimal area bulk surface separating the two.
The R\'{e}nyi entropies are therefore equal to~$\area(\gamma_A)/4G_N$ where~$\gamma_A$ is the the minimal area surface separating~$A$ from~$\bar{A}$.

In the computation of $G(A : B : C)$ using \cref{eq:other permutations}, the boundary condition at region~$A$ is $\pi^{(1)} = (12)(34)$, at region~$B$ it is $\pi^{(2)} = (13)(24)$, and finally in region~$C$ it is the identity.
By the arguments given in \cref{prp:minimal tripartition}, the dominant permutation configuration is the minimal tripartition.
Here, a minimal tripartition for $A$, $B$, and~$C$ is a division of the bulk into three regions $\Gamma_A$, $\Gamma_B$, and~$\Gamma_C$, where the boundary intersects~$\Gamma_A$ at~$A$, and similarly for~$B$ and~$C$, and which is such that the area of the boundaries between $\Gamma_A$, $\Gamma_B$ and $\Gamma_C$ is minimal.
We map $\Gamma_A$ to $\pi^{(1)}$, $\Gamma_B$ to $\pi^{(2)}$ and $\Gamma_C$ to the identity.
The additional action of this configuration relative to the normalization factor $Z_1^4$ is $\area(\gamma_{ABC}) / 2G_N$, where $\gamma_{ABC}$ is the union of the boundaries between $\Gamma_A$, $\Gamma_B$, and~$\Gamma_C$.
Hence,
\begin{align}\label{eq:Zrepl A over 4}
    \frac{Z_\mathrm{repl}(\mathcal A)}{Z_1(\mathcal A)^4} = e^{-\frac {\mathcal A} {2G_N}},
\end{align}
where we write $\mathcal A = \area(\gamma_{ABC})$ and the notation $Z_\mathrm{repl}(\mathcal A)$ and $Z_1(\mathcal A)$ reminds us that we consider partition functions where the minimal tripartition area is fixed to be~$\mathcal A$.
We therefore find that at the leading classical order in the Newton constant~$G_N$,
\begin{align*}
    G(A:B:C) = \min_{\gamma_{ABC}}\frac {\area(\gamma_{ABC})} {4G_N}.
\end{align*}
This replica trick is illustrated in \cref{fig:replica trick}~(b).
As when we added bulk legs to the random tensor networks in \cref{sec:rtn bulk}, there is a subleading $\bigO(1)$ contribution to $G(A:B:C)$ from bulk quantum fields (if the bulk entropy is small compared to $G_N$).
If we compute the gravitational path integral using a saddle-point approximation for the geometry but by doing the full path integral over matter fields, we find that
\begin{align}
    G(A:B:C) = - \frac{1}{2} \log \left[\widehat Z_{\bulk} e^{- 2\area(\gamma_{ABC})/4G_N}\right],
\end{align}
where $\widehat Z_{\bulk}$ is the normalized partition function of the bulk matter fields on the replicated bulk geometry.
Since the gluing in the bulk partition function $\widehat Z_{\bulk}$ is the same as the gluing used to compute $G(A:B:C)$ on the boundary, with regions $A$, $B$, $C$ replaced by $\Gamma_A$, $\Gamma_B$, $\Gamma_C$ respectively, we find that
\begin{align}
   G(A:B:C) = \frac {\area(\gamma_{ABC})} {4G_N} + G_{\bulk}(\Gamma_A:\Gamma_B:\Gamma_C),
\end{align}
in close analogy to the random tensor network result in \cref{eq:rtns with bulk}.

%-----------------------------------------------------------------------------
\subsection{General semiclassical states}
%-----------------------------------------------------------------------------
What about states where the bulk spatial geometry has not been measured?
In this case every step in the derivation above goes through in the same way, until we come to actually evaluating the bulk path integral, whereupon the saddle point geometry will in general be some highly backreacted solution that obeys the Euclidean equations of motion everywhere.

Consider the simple example of the vacuum state of a 1+1-dimensional CFT on a circle, divided into three intervals, $A$, $B$, and $C$.
Each bra and ket is prepared by a boundary partition function on a hemisphere.
Gluing these partitions together in order to compute~$Z = e^{-2G}$ leads to a boundary partition function on a geometry that is topologically a sphere, but a sphere that contains six conical singularities.
The four bras and four kets each make up an octant of the sphere, with the conical singularities lying at the boundaries of the three regions, see \cref{fig:replica trick backreact}.

\begin{figure}
\centering
\begin{overpic}[width=.8\linewidth,grid=false]{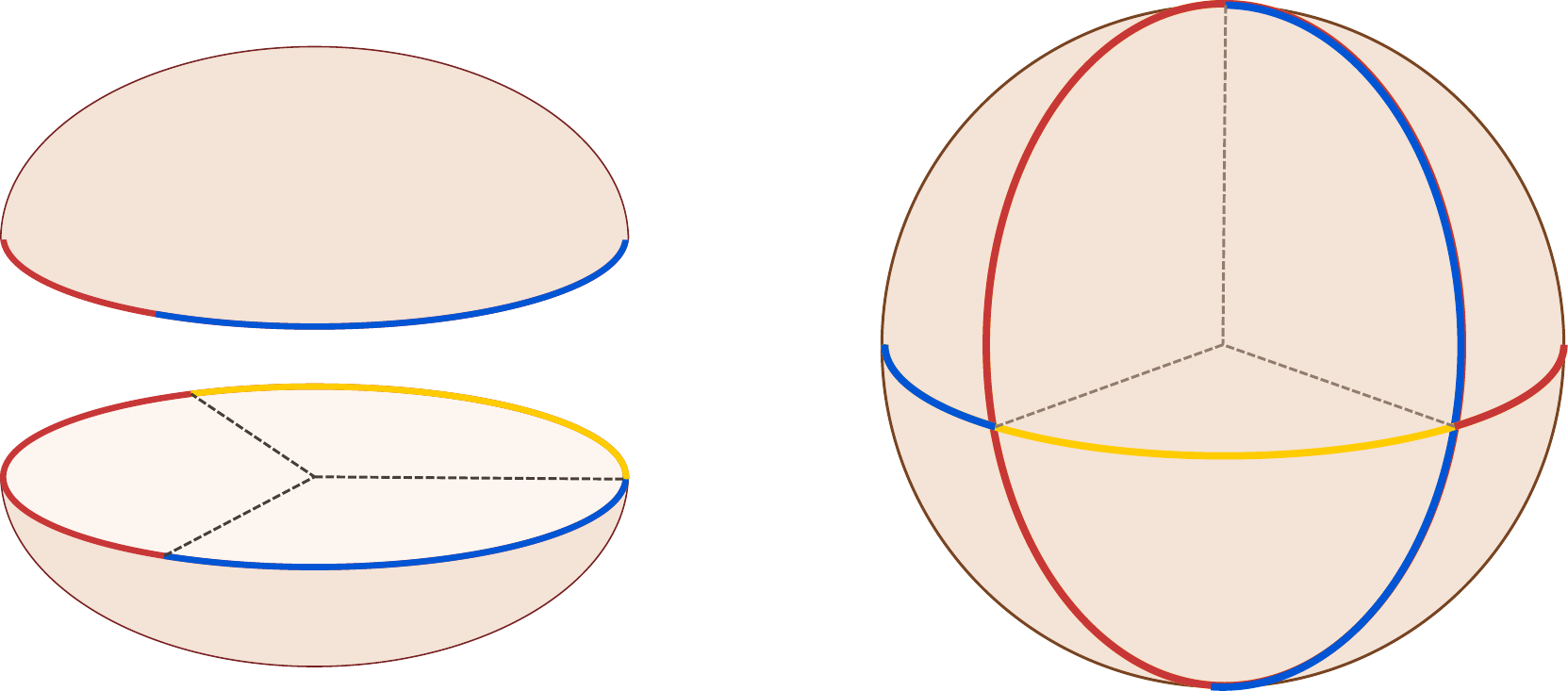}
    \put(-5,14){\color{Bittersweet}{\Large{$A$}}}
    \put(21,9){\color{Blue}{\Large{$B$}}}
    \put(36,19){\color{BurntOrange}{\Large{$C$}}}
    \put(10,20){$x_1$} \put(9,10){$x_2$} \put(41,12){$x_3$}
    \put(40,36){\Large{$\times 4$}}
    \put(65,25){\color{Bittersweet}{\Large{$A$}}}
    \put(87,25){\color{Blue}{\Large{$B$}}}
    \put(77,17){\color{BurntOrange}{\Large{$C$}}}
\end{overpic}
\caption{The replica trick for $G(A : B : C)$ where the geometry is not fixed.
The resulting backreacted geometry is a sphere, where each octant corresponds to a copy of $\ket{\Psi}$ or $\bra{\Psi}$.\label{fig:replica trick backreact}}
\end{figure}

The topology can be seen directly by gluing together bras and kets with the appropriate permutations, but it can also be verified by computing the Euler characteristic of the manifold.
Each bra and each ket contributes one face, and is bordered by three edges.
Finally, for each of the three boundaries between regions we have two vertices, because the permutations
% $\pi_A \pi_B^{-1}$, $\pi_B \pi_C^{-1}$ and $\pi_C \pi_A^{-1}$
$\pi^{(1)} (\pi^{(2)}))^{-1}$, $\pi^{(2)} (\pi^{(3)})^{-1}$, and $\pi^{(3)} (\pi^{(1)})^{-1}$
each have two cycles.

The Euler characteristic is therefore
\begin{align*}
    \chi = \text{8 faces $-$ 12 edges $+$ 6 vertices} = 2,
\end{align*}
proving that the manifold is indeed topologically a sphere.
A Weyl transformation maps the metric $g_{\mu \nu}$ of this sphere to the round metric $\hat g_{\mu \nu} = e^{-\phi} g_{\mu\nu}$.
Then the partition function is~$Z_\mathrm{repl} = e^{S_L} \widehat Z_\mathrm{sph}$, where $\widehat Z_\mathrm{sph}$ is the partition function on the sphere with the round metric and~$S_L$ is the Liouville action
\begin{align*}
    S_L = \frac{c}{96 \pi} \int \,d\hat V\,\left(\hat g^{\mu\nu} \partial_\mu \phi \partial_\nu \phi + 2 \hat R \phi\right)
\end{align*}
with $d \hat V$ and $\hat R$ respectively the volume form and Ricci scalar for the round metric $\hat g_{\mu \nu}$.
The prefactor of $e^{S_L}$ created by the rescaling of $g_{\mu \nu}$ comes from the Weyl anomaly of the CFT and hence depends only on the central charge.
The same result can also be computed from a bulk perspective using the shift in the location of the cut-off at infinity.
Because the metric $g_{\mu \nu}$ is singular at the conical singularities, the Liouville action will be divergent.
This divergence is regulated by smoothing out the conical singularity at some UV-cutoff lengthscale $\eps$.
The divergence is proportional to the conical excess and number of conical singularities, and scales as $(3 c/4)\log \eps$.
The full form of $Z_\mathrm{repl}$ is then determined entirely by the conformal symmetry and the exchange symmetry for the three regions to be
\begin{align*}
    Z_\mathrm{repl} = \frac{\eps^{3c/4}}{(x_1 - x_2)^{c/4}(x_2 - x_3)^{c/4} (x_3 - x_1)^{c/4}}.
\end{align*}
Note that any overall prefactor can be absorbed into an $\bigO(1)$ rescaling of $\eps$.
Our final result is therefore
\begin{align*}
    G (A:B:C)= \frac{c}{8} \log\left[\frac{(x_1 - x_2)(x_2 - x_3)(x_3 - x_1)}{\eps^3}\right].
\end{align*}
By comparison, the minimal area tripartition~$\gamma_{ABC}$ in vacuum AdS$_3$ has an area such that
\begin{align*}
    \frac{\area(\gamma_{ABC})}{4G_N} = \frac{c}{6} \log\left[\frac{(x_1 - x_2)(x_2 - x_3)(x_3 - x_1)}{\eps^3}\right].
\end{align*}
So the two formulas do not agree, although in this simple case they do agree up to an overall prefactor.%
\footnote{This is a consequence of conformal symmetry, analogous to the fact that R\'{e}nyi entropies for a single interval in a 2D CFT differ only by an overall prefactor.}

How can we understand the relationship of this result with our earlier computation showing that $G = \area(\gamma_{ABC})/4G_N$ in states where the bulk geometry is fixed?
The answer is fairly simple.
In the vacuum partition function computation, the classical bulk saddle point is just Euclidean AdS$_3$, which is topologically a ball with no singularities.
As shown in \cref{fig:replica trick backreact}, if we divide this ball up into eight octants, one for each bra or ket on the boundary, then each octant is bounded by a hemispherical boundary partition function (preparing the bra or ket) along with a backreacted version of the spatial slice of AdS$_3$.
Other than this backreaction, the gluing of the different octants obeys exactly the rules described in \cref{sec:fixedarea}.

Let us now write the AdS$_3$ vacuum as a superposition of states with different tripartition areas.
For each fixed value of the tripartition area $\mathcal A$, we claimed above $G(A:B:C) = \mathcal A/4G_N$.
If we assume that off-diagonal contributions can be neglected, it follows that for the vacuum state we have
\begin{align}\label{eq:Aint}
    e^{-2G} = \int d \mathcal A\,\, p(\mathcal A)^4 \,e^{-2 \frac{\mathcal A}{4G_N}}
\end{align}
where $p(\mathcal A)$ is the probability of the tripartition having area $\mathcal A$.
Generically, this will be dominated by some saddle point value for~$\mathcal A$.
For \emph{any} $\mathcal A$, $p(\mathcal A) = Z_1(\mathcal A)/Z_1$ where $Z_1$ is the full partition function on the round sphere (which computes the normalization of the vacuum state), and $Z_1(\mathcal A)$ is the same partition function, except projected onto the tripartition having area $\mathcal A$.
We already showed in \cref{eq:Zrepl A over 4} that
\begin{align*}
    \frac {Z_\mathrm{repl}(\mathcal A)} {Z_1(\mathcal A)^4} = e^{-\frac {\mathcal A} {2G_N}},
\end{align*}
where $Z_\text{repl}(\mathcal A)$ is the replicated partition function with tripartition area~$\mathcal A$.
Hence
\begin{align*}
    p(\mathcal A)^4 e^{-2 \frac{\mathcal A} {4G_N}} = \frac {Z_\mathrm{repl}(\mathcal A)} {Z_1^4}.
\end{align*}
Finally, the integrand in \cref{eq:Aint} is maximized when $\mathcal A$ is chosen so that the bulk geometry computing~$Z_\mathrm{repl}(\mathcal A)$ obeys the equations of motion everywhere (including at the tripartition), leading to a final answer $e^{-2G} = Z_\mathrm{repl}/Z_1^4$ where $Z_\mathrm{repl}$ is the replicated saddle point without fixing the tripartition area.

This is exactly what we found: $G(A:B:C)$ was computed by a bulk saddle point that looked identical to the fixed geometry computations, except that the geometry was backreacted so that the tripartition had smaller area.
Even though there was only an exponentially small probability of the vacuum having this backreacted geometry, the tripartition area was sufficiently smaller in the backreacted geometry that it dominated the computation of $G(A:B:C)$.
This story is in fact completely general, rather than being specific to three intervals in vacuum AdS$_3$.
For detailed discussions of the closely analogous situation with R\'{e}nyi entropies in fixed-area states and general states see \cite{dong2019flat, akers2019holographic}.
The only new ingredient here is the intersection of the different domain walls.
One might worry that even if we are able to smooth away all the conical singularities then the \emph{intersection} of the different domain walls may still be singular.%
\footnote{As we shall see  in \cref{sec:generalize}, this worry is very reasonable.
In the computation of generalizations of $G(A:B:C)$, the intersection of the domain walls have an inherently singular topology and cannot be smoothed out by backreacting the geometry.} Fortunately it easy to check that the boundary of a small neighbourhood of this intersection has topology $S^2 \times M$ where $M$ is the topology of the intersection.
If we smooth out the conical singularities we can fill in this neighbourhood with $B \times M$ with $B$ a smooth $3$-ball, as we saw in the simple example above.

To get a quantity that actually computes the tripartition area for the \emph{unbackreacted} geometry, in general semiclassical states, we can used the smoothed version~$G^\eps(A:B:C)$ of~$G(A:B:C)$ introduced in \cref{eq:smoothed G}.
By perturbing any semiclassical state by an $\bigO(\eps)$, we can ensure that it's geometry is fixed up to $\bigO(\sqrt{G_N} \log \eps)$ corrections~\cite{akers2021leading,marolf2020probing}.
As a result, we can lower bound $G^\eps(A:B:C)$ by the tripartition area $\area(\gamma_{ABC})/4G_N$ minus some subleading $\bigO(G_N^{-1/2} \log\frac1\eps)$ correction.
Conversely, no $\bigO(\eps)$ change to the state can stop the unbackreacted geometry from giving the dominant contribution to the wavefunction.
Since the contribution to $G(A:B:C)$ from this part of the wavefunction is determined up to an $\bigO(1)$ factor by its classical geometry, we have an upper bound on $G^\eps(A:B:C)$ given by $\area(\gamma_{ABC})/4 G_N + \bigO(1)$.
We therefore find that
\begin{align}
    G^\eps(A:B:C) = \frac{\mathrm{Area}(\gamma_{ABC})}{4 G_N} + \bigO(G_N^{-1/2} \log \eps),
\end{align}
where~$\gamma_{ABC}$ is the minimal tripartition in the unbackreacted geometry of the semiclassical state.
This closely matches the analogous result for smoothed R\'{e}nyi entropies~\cite{bao2019beyond}, which are given by $\area(\gamma_A)/4G_N + \bigO(G_N^{-1/2})$ with~$\gamma_A$ the minimal area surface homologous to~$A$.

We have focused here on states prepared by Euclidean path integrals, which always have a preferred spatial slice that is invariant under a time-reflection symmetry.
We expect that our results can be extended to general time-dependent geometries, as with entanglement entropy replica trick computations, by consider saddle points of complex geometries.
In that case the minimal tripartition (within the time-reflection symmetric slice) would be replaced by the minimal area extremal tripartition (i.e., a tripartition that has invariant area at linear order under perturbations in both time- and space-directions).
One way to construct an extremal tripartition is to uses a maximinimization procedure where one maximises, over all possible Cauchy slices, the area of the minimal tripartition within each Cauchy slice \cite{Wall:2012uf}.
However, unlike for minimal bipartitions, it is not obvious how to use a focusing argument to show that this is actually the minimal area extremal tripartition.

%=============================================================================
\section{Generalizations} \label{sec:generalize}
%=============================================================================
Given a tripartite pure state $\ket{\Psi}_{ABC}$, there is a natural family of quantities $G_n(A:B:C)_{\ket{\Psi}}$ for integer $n$ such that $G_2(A : B : C)_{\ket{\Psi}}$ coincides with $G(A : B : C)_{\ket{\Psi}}$.
We start from the reformulation in~\cref{eq:other permutations} and define permutations
\begin{align*}
  \pi^{(1)} &= \parens[\big]{ 1, 2, \dots, n } \parens[\big]{ n + 1, n+2, \dots, 2n } \cdots \parens[\big]{ n^2-n+1,n^2-n+2,\dots, n^2 }, \\
  \pi^{(2)} &= \parens[\big]{ 1, n+1, \dots, n^2-n+1 } \parens[\big]{ 2, n+2, \dots, n^2-n+2 } \cdots \parens[\big]{ n, 2n, \dots, n^2 }.
\end{align*}
That is, if one arranges the numbers $1,\dots,n^2$ in an $n\times n$ square, one row after the other, then the cycles in $\pi^{(1)}$ permute the rows, while the one in $\pi^{(2)}$ permute the columns.
Then we define
\begin{align}\label{eq:Zndef}
  Z_n(A:B:C)_{\ket\Psi} := \bra\Psi^{\ot n^2} \parens*{ \pi^{(1)}_A \ot \pi^{(2)}_B \ot \id_C } \ket\Psi^{\ot n^2}.
\end{align}
If $Z_n(A : B: C)_{\ket\Psi} > 0$ we may define
\begin{align}\label{eq:Gndef}
  G_n(A:B:C)_{\ket\Psi} := \frac1{n(1-n)} \log Z_n(A:B:C)_{\ket\Psi},
\end{align}
but we do not know if $Z_n(A:B:C)_{\ket\Psi} > 0$ in general.
This family of quantities was recently independently introduced in~\cite{gadde2022multi}.

One motivation for this particular choice of permutations from the perspective of our work is that the permutations $\pi^{(1)}$, $\pi^{(2)}$, and $\id$ are equidistant,
\begin{align*}
    d(\pi^{(1)},\pi^{(2)}) = d(\pi^{(1)},\id) = d(\pi^{(2)},\id) = n(n-1)
\end{align*}
and they are incompatible in the sense that any permutation which is on a geodesic between two of them is not on a geodesic to the third (cf.\ the discussion in \cref{subsec:incompat}).
This suggests that $G_n(A : B : C)$ may be computed by minimal tripartitions in random tensor networks and holographic settings.
Given such a family, it is natural to wonder, as was argued in \cite{gadde2022multi}, whether one can ``analytically continue'' to non-integer $n$ and obtain an entanglement measure in the $n \to 1$ limit, which, analogous to the von Neumann entropy, would be holographically dual to the minimal bulk tripartition without any need for smoothing.
Note that for random tensor networks, at least in expectation, $Z_n(A:B:C)$ is positive, as it is computed by a positive partition function.
Similarly, in holographic gravity, $Z_n(A:B:C)$ is a positive quantity if one computes the path integral semiclassically.

We will first discuss some basic properties of $G_n(A : B : C)$ and then some interesting features of holographic computations of $G_n(A : B : C)$ for larger $n > 2$ which make a naive analytic continuation analogous to that done by Lewkowycz and Maldacena \cite{lewkowycz2013generalized} impossible.

%-----------------------------------------------------------------------------
\subsection{Basic properties of \texorpdfstring{$G_n(A:B:C)$}{G\_n(A:B:C)}}
%-----------------------------------------------------------------------------
A moment's reflection shows that the quantity $Z_n(A : B : C)$ (and therefore $G_n(A:B:C)$) is invariant under permuting the subsystems $A$, $B$ and $C$ (this follows by appropriate permutations the $n^2$ bras and separately the $n^2$ kets).
It is similarly obvious that $G_n$ is additive under tensor products and invariant under local isometries, as in \cref{lem:properties G}.

% We will now show that $G_n(A:B:C)$ is well-defined and nonnegative for even~$n$.
% To this end, it will be convenient to study the quantity
% \begin{align}\label{eq:Zpq}
%   Z_{p,q} := \tr \rho_{AB}^{\ot pq} \parens*{ \sigma^{(1)}_A \ot \sigma^{(2)}_B },
% \end{align}
% where $\sigma^{(1,q)}, \sigma^{(2,p)} \in S_{pq}$ are the permutations
% \begin{align*}
%   \sigma^{(1)} &= \parens[\big]{ 1, 2, \dots, q } \parens[\big]{ q+1, q+2, \dots, 2q } \cdots \parens[\big]{ pq-q+1,pq-q+2,\dots, pq }, \\
%   \sigma^{(2)} &= \parens[\big]{ 1, q+1, \dots, pq-q+1 } \parens[\big]{ 2, q+2, \dots, pq-q+2 } \cdots \parens[\big]{ q, 2q, \dots, pq },
% \end{align*}
% which can be visualized in terms of the rows and columns of a grid of size $p \times q$.
% For~$p=q=n$, these are just the permutation $\pi^{(1)}$ and $\pi^{(2)}$ defined above, and hence
% \begin{align*}
%   G_n(A:B:C)
% % = \frac1{n(1-n)} \log \bra\Psi^{\ot n^2} \parens*{ \pi^{(1)}_A \ot \pi^{(2)}_B \ot \id_C } \ket\Psi^{\ot n^2}
% = \frac1{n(1-n)} \log \tr \rho_{AB}^{\ot n^2} \parens*{ \pi^{(1)}_A \ot \pi^{(2)}_B }
% = \frac1{n(1-n)} \log Z_{n,n}.
% \end{align*}
It is easy to see that $Z_{n}(A:B:C)$ is real since
\begin{align}
  \overline{Z_{n}(A:B:C)}
= \tr \rho^{\ot n^2} \parens*{ \pi^{(1)}_A \ot \pi^{(2)}_B }^\dagger
= \tr \rho^{\ot n^2} \parens*{ \pi^{(1),-1}_A \ot \pi^{(2),-1}_B }
= Z_{n}(A:B:C);
\end{align}
the latter by relabeling the replicas by $i \mapsto n^2 + 1 - i$.

The quantity~$Z_{n}$ can again be interpreted in terms of a tensor network computation, generalizing the discussion in \cref{subsec:basic G}.
If we think of $\rho_{AB}$ as a PEPS tensor, with bond dimensions $d_A$ and $d_B$, then $Z_n(A:B:C)$ is given by the the PEPS contraction of $n^2$ copies of this tensor on a periodic $n \times n$ lattice, generalizing \cref{fig:gpeps}~(a).
% Denote by $X_n$ the matrix product operator on~$\HH_A^{\ot n}$ defined by contracting~$n$ copies of $\rho_{AB}$ along the $B$-direction, i.e., $X_n := \tr_{B^{\ot n}}[\rho_{AB}^{\ot n} \tau_B]$, where $\tau = (1,2,\dots,n)$.
% Then
% \begin{align*}
%   Z_{n} = \tr\mleft[ X_n^n \mright].
% \end{align*}

We do not know how to prove an analogue of \cref{lem:bounds G}, but note that the argument in its proof shows that $\abs{Z_n} \leq 1$.
Thus, \emph{if} $Z_n > 0$, then $G_n(A : B : C) \geq 0$.

Let us compute~$G_n$ for the two examples previously studied for the $n=2$ case in \cref{sec:multipartite entanglement replica trick}.
In both cases $Z_n$ is positive, so $G_n$ is well-defined.
The first example is the tripartite GHZ state~\cref{eq:ghz} of local dimension~$d$.
Similarlyt to the case $n=2$, it is easy that the normalization contributes a factor~$d^{-n^2}$, whereas the trace contributes a factor~$d$ so
\begin{align*}
    \bra{\GHZ_d}^{\ot n^2} \parens*{ \pi^{(1)}_A \ot \pi^{(2)}_B \ot \id_C } \ket{\GHZ_d}^{\ot n^2}
  = d^{-n^2} \cdot d
  = d^{-(n+1)(n-1)}
\end{align*}
and hence
\begin{align*}
  G_n(A:B:C)_{\ket{\GHZ_d}}
% = \frac{(n+1)(n-1)}{n(n-1)} \log d
= \frac{n+1}{n} \log d.
\end{align*}
The second example was an arbitrary state~\cref{eq:bipartite} with purely bipartite entanglement,
\begin{align*}
  \ket\Psi_{ABC} = \ket{\psi_1}_{A_1 B_1} \ot \ket{\psi_2}_{A_2 C_1} \ot \ket{\psi_3}_{B_2 C_2}.
\end{align*}
% \begin{align*}
%   \ket{\Psi}_{ABC} = \ket{\psi}_{A B} \ot \ket0 \\
%   ... = (\tr \rho^n)^n
%  G_n = 1/n(1-n) log (...) = 1/(1-n) log tr rho^n = S_n(A) = 1/2 (S_n(A) + ... + S_n(C))
% \end{align*}
It is not hard to verify that in this case
\begin{align*}
  G_n(A:B:C)_{\ket\Psi} = \frac12 \parens*{ S_n(A)_{\ket\Psi} + S_n(B)_{\ket\Psi} + S_n(C)_{\ket\Psi} }.
\end{align*}

In these two examples, there is a natural analytical continuation of~$G_n$ to~$G_{\alpha}$ for arbitrary values of $\alpha$, in particular to $\alpha = 1$, leading to
\begin{align*}
    G_1(A:B:C)_{\ket{\GHZ_d}} &= 2\log(d) \\
    G_1(A:B:C)_{\ket{\Psi}} &= \frac12 \parens*{ S(A)_{\ket\Psi} + S(B)_{\ket\Psi} + S(C)_{\ket\Psi} }.
\end{align*}

It was assumed in \cite{gadde2022multi} that $G_n(A:B:C)$ is \emph{always} analytic in $n$, and hence that one can define a quantity~$G_n(A:B:C)$ by analytic continuation.
Of course, any such analytic continuation is necessarily non-unique, because one can always add an analytic function such as $\sin (\pi n)$ that vanishes on all integers, or a function such as $\sin (\pi n)/ (n-1)$ that vanishes only for all integer $n \geq 2$.
However, one might hope that there is a natural independent definition of~$G_\alpha(A:B:C)$ that is (a)~analytic for all real $\alpha \geq 1$ and (b)~reduces to $G_n(A:B:C)$ for integer $\alpha = n \geq 2$.
Even in the absence of such an independent definition, one might at least hope that there exists an analytic continuation~$G_\alpha(A:B:C)$ whose growth as $\abs{\alpha} \to \infty$ obeys the exponential bounds listed in Carlson's theorem (as such an analytic continuation would then be unique within the class of such analytic functions).

The idea that such an analytic continuation always exists is certainly attractive.
However, for general states we are not aware of an obvious approach to showing its existence, and we certainly do not know an independent definition of $G_1(A : B : C)$, analogous to the formula $S(\rho) = - \tr(\rho \log \rho)$ for the von Neumann entropy.

%-----------------------------------------------------------------------------
\subsection{Tensor networks and gravity}
%-----------------------------------------------------------------------------
Recall that the permutations $\pi^{(1)}$, $\pi^{(2)}$, and $\pi^{(3)}$ are equidistant and incompatible.
This makes that it is plausible that if one computes $G_{n}(A : B : C)$ for a random tensor network state the dominant configuration will still be given by a minimal tripartition.
However, a formal generalization of the proof from \cref{sec:rtn} to $G_n(A:B:C)$ for $n \geq 3$ is nontrivial, and we have not been able to find a clean and general construction for doing so.

In holographic states, on the other hand, we can show very explicitly that a replica trick saddle for~$G_n(A:B:C)$ in general does \emph{not} look like the minimal tripartition for $n \geq 3$.
Consider again the example described in \cref{sec:gravity} of vacuum AdS$_3$ with three contiguous boundary regions $A$, $B$, and $C$.
In the replica trick for $G_n(A : B: C)$ we have~$n^2$ bras and~$n^2$ kets.
The boundary topology can therefore be modelled by a polygon with $2n^2$ faces and~$3n^2$ edges.
Each of $(\pi^{(1)})^{-1}\pi^{(2)}$, $(\pi^{(2)})^{-1}\pi^{(3)}$ and $(\pi^{(3)})^{-1}\pi^{(1)}$ has $n$ cycles, so there are~$3n$ vertices and the Euler characteristic is given by
\begin{align*}
    \chi = \text{$2n^2$ faces $-$ $3n^2$ edges $+$ $3n$ vertices} = n(3-n).
\end{align*}
For instance, for $n = 3$ the boundary manifold topologically is a torus, as shown explicitly in \cref{fig:bdytorus}.

\begin{figure}
\centering
\begin{overpic}[width=0.7\textwidth,grid=false]{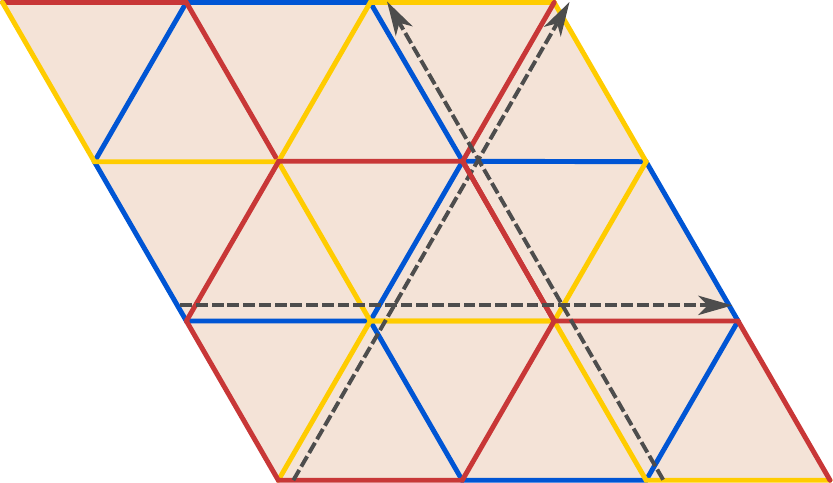}
\put(8,60){\color{Bittersweet}{\Large{$A$}}}
\put(0,46){\color{BurntOrange}{\Large{$B$}}}
\put(10,27){\color{Blue}{\Large{$C$}}}
\put(10,49){\Large{$\overline{5}$}} \put(32,49){\Large{$\overline{6}$}} \put(54,49){\Large{$\overline{3}$}}
\put(21,45){\Large{$5$}} \put(43,45){\Large{$3$}} \put(65,45){\Large{$2$}}

\put(21,30){\Large{$\overline{8}$}} \put(43,30){\Large{$\overline{1}$}} \put(65,30){\Large{$\overline{2}$}}
\put(32,26){\Large{$7$}} \put(54,26){\Large{$1$}} \put(76,26){\Large{$8$}}

\put(32,11){\Large{$\overline{7}$}} \put(54,11){\Large{$\overline{4}$}} \put(76,11){\Large{$\overline{9}$}}
\put(43,7){\Large{$4$}} \put(65,7){\Large{$6$}} \put(87,7){\Large{$9$}}
\end{overpic}
\caption{The boundary replica geometry for~$G_3$.
The $j$-th copies of path integrals computing $\ket{\Psi}$ and $\bra{\Psi}$ are labeled by $j$ and $\overline{j}$ respectively, for $j \in \{1,\dots,3^2\}$.
The resulting boundary geometry is a torus, with conformal structure labeled by $\tau = e^{i\pi/3}$, and we have an automorphism which permutes the three indicated cycles.}
\label{fig:bdytorus}
\end{figure}

For $n = 2$, the replicated bulk geometry was a backreacted version of the minimal tripartition configuration that we found for fixed geometry states.
In particular, both were topologically a three-dimensional ball filling in the boundary two-sphere.
For $n=3$, however, the minimal tripartition configuration is topologically a cone over a torus $T^2$.
By this we mean that it is the quotient of the product space $T^2 \times [0,1]$ (with $T^2 \times \{1\}$ the conformal boundary torus) by an equivalence relation identifying $T^2 \times \{0\}$ to a single point.

Unlike a cone over $S^2$ -- which is a ball -- a cone over $T^2$ is not a topological manifold because it is singular at the origin.
The same is true for $n>3$, where the minimal tripartition configuration is topologically a cone over a higher genus surface.
As a result, the smooth geometry that dominates the replica trick computation is necessarily topologically inequivalent to the minimal tripartition configuration.
Indeed, it is well known that the dominant bulk topology for a toroidal asymptotic boundary is a solid torus $D^2 \times S^1$ where the shortest cycle of the torus (with respect to the flat metric) becomes contractible in the bulk~\cite{Witten:1998qj, Maldacena:2001kr}.
As can be seen in \cref{fig:bdytorus}, the complex structure of the $n=3$ torus has a conformal structure labeled by $\tau = e^{i \pi/3}$.
Such a torus has an enhanced $S^3$ automorphism group that permutes a set of three equal-length shortest cycles.
We therefore have a set of three dominant bulk saddles (each with equal action) where one of these three cycles becomes contractible.

We were not able to find any method of constructing such topologies by gluing together fixed-geometry bras and kets with appropriate permutations.
Since $\abs{S_9} = 9!$ is quite large, and gluing together 18 bras and kets is a nontrivial procedure to visualize, this may represent a failure of imagination on our part.
If so, it would be interesting to see whether the configuration giving the solid torus topology also dominates over the minimal tripartition for random tensor network states (perhaps with nontrivial link states), or whether backreaction, or other novel gravitational physics, is important in allowing the solid torus to dominate.
It is also possible that no such set of permutations exists, and the solid torus configuration has no analogue in random tensor networks at all.

Finally, let us comment briefly on the arguments used in~\cite{gadde2022multi} to argue that the hypothetical quantity $G_1(A:B:C)$ is dual to the minimal area tripartition.
The usual approach~\cite{lewkowycz2013generalized} to compute the von Neumann entropy using the replica trick is to use the replica trick for the $n$-th R\'enyi entropy, then realize that the boundary geometry has a $\ZZ_n$-symmetry (by permuting the copies cyclically) and assume that the bulk has the same \emph{replica symmetry}.
This then allows one to take a quotient, and the problem reduces to a geometry with a single replica, but with some conical singularity.
A key aspect of this derivation is the assumption that the bulk geometry does not break the replica symmetry.

The replica trick for $G_n(A : B : C)$ also has a natural replica symmetry given by~$\ZZ_n \times \ZZ_n$, where $(j_1,j_2) \in \ZZ_n$ maps the $(k_1 + n(k_2 - 1))$-th copy (for $k_1, k_2 = 1, \dots, n$) to the $(k_1' + n(k_2' - 1))$-th copy with $k_1' = k_1 +j_1 \pmod n$ and $k_2' = k_2 +j_2 \pmod n$.
In~\cite{gadde2022multi} the authors assume that this symmetry is not broken for the minimal bulk configurations, and use this to derive their results.
However it is easy to check that this assumption is incorrect: the minimal tripartition topology preserves this $\ZZ_3 \times \ZZ_3$ symmetry, but the three smooth solid torus geometries break it down to a $\ZZ_3$ diagonal subgroup with $j_1 = j_2$.
This can be seen explicitly in \cref{fig:bdytorus}.
The $\ZZ_3$ diagonal subgroup acts as translations of the boundary torus along the diagonal from the top left to the bottom right.
It therefore preserves the three shortest cycles.
However, all other elements of the $\ZZ_3 \times \ZZ_3$ symmetry group rotate the torus, and hence permute the three shortest cycles.
Consequently, they also permute the three bulk saddles, rather than preserving any of them, as required for the arguments in~\cite{gadde2022multi}.

%-----------------------------------------------------------------------------
\subsection*{Acknowledgements}
%-----------------------------------------------------------------------------
G.~Penington is supported by the UC Berkeley Physics Department, the Simons Foundation through the ``It from Qubit'' program, the Department of Energy via the GeoFlow consortium (QuantISED Award DE-SC0019380) and an early career award, and AFOSR award FA9550-22-1-0098.
M.~Walter acknowledges the European Research Council~(ERC) through ERC Starting Grant 101040907-SYMOPTIC, the Deutsche Forschungsgemeinschaft (DFG, German Research Foundation) under Germany's Excellence Strategy - EXC\ 2092\ CASA - 390781972, the BMBF through project QuBRA, and NWO grant OCENW.KLEIN.267.

\bibliographystyle{JHEP}
\bibliography{references}

\end{document}